\documentclass{article}

\PassOptionsToPackage{numbers}{natbib}

\usepackage{tabularx}
\usepackage{arxiv}
\usepackage{diagbox}
\usepackage{wrapfig}
\usepackage{placeins}

\usepackage[utf8]{inputenc} %
\usepackage[T1]{fontenc}    %
\usepackage{hyperref}       %
\usepackage{url}            %
\usepackage{booktabs}       %
\usepackage{amsfonts}       %
\usepackage{nicefrac}       %
\usepackage{microtype}      %
\usepackage{xcolor}         %
\usepackage{amsmath}
\usepackage{amsthm}
\usepackage{amssymb}
\usepackage{cleveref}
\usepackage{tikz}
\usepackage{xcolor}
\usepackage{graphicx}
\usepackage{listings}
\usetikzlibrary{arrows.meta,positioning,fit,calc,backgrounds}
\newtheorem{theorem}{Theorem}

\newtheorem{definition}{Definition}

\newcommand{\Anc}{\operatorname{Anc}}
\newcommand{\Dep}{\operatorname{Dep}}

\newcommand{\accept}{\mathsf{accept}}
\newcommand{\reject}{\mathsf{reject}}

\lstdefinestyle{prompt}{
    basicstyle=\ttfamily\footnotesize,
    breaklines=true,
    breakatwhitespace=false,
    columns=fullflexible,
    keepspaces=true,
    showstringspaces=false,
    frame=single,
    framerule=0.4pt,
    rulecolor=\color{gray!50},
    backgroundcolor=\color{gray!5},
    aboveskip=0.6em,
    belowskip=0.6em,
    xleftmargin=0.6em,
    xrightmargin=0.6em,
    literate={–}{{-}}1 {—}{{---}}1 {…}{{...}}1,
}
\lstnewenvironment{promptbox}{\lstset{style=prompt}}{}
\title{Pseudo-Formalization for Automatic Proof Verification}

\author{Slim Barkallah*, Luke Bailey*, Kaiyue Wen*, Mohammed Abouzaid, Tengyu Ma}

\begin{document}

\maketitle

\begin{center}
    \vspace{-2mm}
   Stanford University  
\end{center}

\begin{abstract}
Reliable verification of proofs remains a bottleneck for training and evaluating AI systems on hard mathematical reasoning. Fully formal proofs, in languages like Lean, are easy to verify because they are unambiguous and modular. Most proofs, particularly those written by AI systems, have neither property, and translating them into formal languages remains challenging in many frontier math settings.
We propose Pseudo-Formalization (PF),
a proof format that captures the modularity and precision
of formal proofs while retaining the
flexibility of natural language. A Pseudo-Formal
proof is decomposed into self-contained modules, each
stating its premises, conclusion, and proof
in natural language. To verify the
correctness of a regular natural language proof, an LLM
translates it to Pseudo-Formal and then verifies each module
independently, an algorithm we call Block Verification (BV).
We evaluate PF+BV on two benchmarks spanning olympiad
and research-level
mathematics, where it pareto-dominates LLM-as-judge 
baselines on error-finding precision and recall.
To support future work, we release our research-level
proof verification benchmark \texttt{ArxivMathGradingBench}.
\end{abstract}

{
\renewcommand{\thefootnote}{}
\footnotetext{
* Equal Contribution\quad\quad\quad
Correspondence to \{slimbark, ljbailey, kaiyuew, abouzaid, tengyuma\}@stanford.edu \\
 Code available at \url{https://github.com/Slim205/pseudo-formalization}}
\addtocounter{footnote}{-1}
}

\begin{figure}[ht!]
    \centering
    \includegraphics[width=0.9\textwidth]{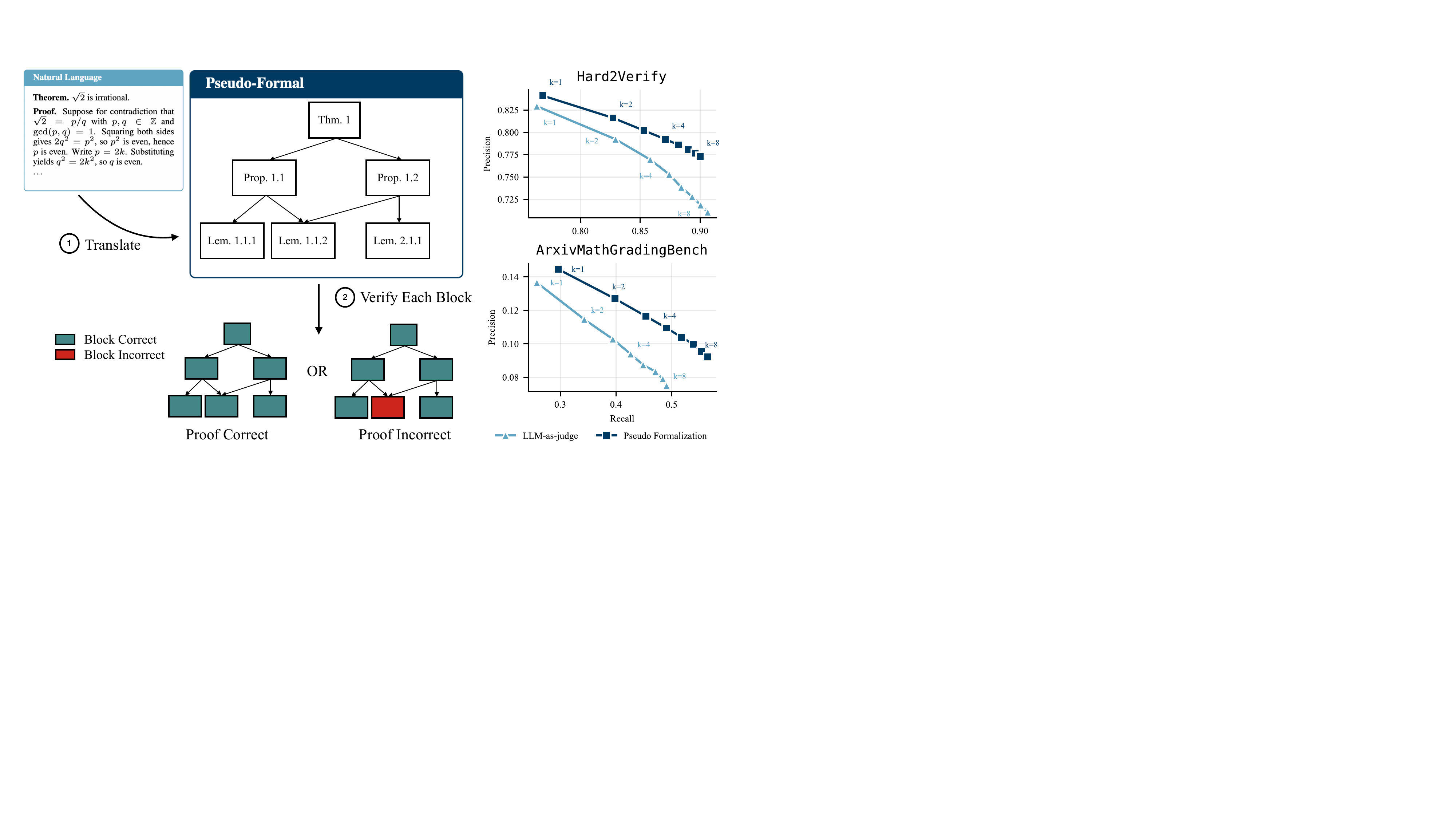}
    \caption{
        \textbf{Left:} Diagram indicating how Pseudo-Formalization
        can be used to verify proofs. We translate a natural
        language proof to Pseudo-Formal representation
        (\Cref{def:pf-proof}), and then verify each block 
        (see \Cref{fig:imo-pf-tree-gb0813} for real example).
        \textbf{Right:} Results comparing Pseudo-Formalization
        to baseline LLM-as-judge verification. We report 
      the precision vs recall on identifying the mistaken step / lemma on \texttt{Hard2Verify}
        benchmark \citep{pandit2025hard2verify} and real arXiv
        papers with known errors.
          Each curve traces k = 1, 2, 4, 8 independent
          runs aggregated
          per problem. Pseudo-Formalization pareto-dominates
          LLM-as-judge in both
          settings, achieving higher precision at every recall level.}
    \label{fig:figure_1}
\end{figure}

\section{Introduction}

The ability to use AI to automatically and reliably verify the correctness
of math proofs would have substantial practical and scientific value \citep{anil2021learning,kirchner2024prover,gao2025mathminos}.
It would provide trustworthy reward signals for training LLMs on
hard mathematical reasoning, assist mathematicians in reviewing
proofs authored by humans or machines, and enable scalable evaluation
of AI mathematical capabilities \citep{lightman2024lets,zhang2025generative,liu2025trust,wang2025solvingverifying}. Yet despite progress in
proof generation \citep{lin2025goedel,lin2025goedelv2,shao2025deepseekmath,huang2025winning},
verification remains a bottleneck
\citep{luong2025towards,pandit2025hard2verify,huang2025pessimistic,mahdavi2025scaling,processbench}.
For example,  \citet{luong2025towards} find that although Gemini
Deep Think achieves a Gold medal in IMO 2025, it only gets
50\% accuracy when grading IMO proofs.

What makes a proof easy to verify? To answer this, consider the easiest
proofs to verify: ones written in fully formal languages such as Lean or
Isabelle, which can be checked mechanically by a compiler \cite{de2015lean, moura2021lean,nipkow2002isabelle}.
Formal proofs admit such easy verification for two reasons.
The first is \emph{semantic precision}: each step has a single,
precise interpretation.
This comes from the language's programmatic nature, where every symbol and inference rule is governed by a strict type system and formal grammar \cite{harrison2008formal}.
The second is \emph{modularization}: the proof
is decomposed into self-contained units
(lemmas, propositions, claims) each with explicitly stated
premises and conclusions, such that each unit can be verified
independently. Without modularization, verification
requires holding the entire argument in working memory
and tracking long chains of dependencies, which
scales poorly with proof length. Modularization
converts a single global verification problem into
many local ones, each dramatically easier: the verifier
only needs to check that the stated premises imply the
stated conclusion, taking upstream modules as given.
It also localizes errors, so a flaw in one module can be
targeted for revision without rewriting the whole proof.

Many proofs written by humans and AI systems do not obviously
satisfy these properties, making them hard to verify.
We cannot easily translate them into fully
formal languages as formal libraries needed to express
modern research mathematics are far from complete,
and even with substantial AI assistance, formalizing
mathematics research can take months or years of expert
effort \citep{buzzard2024flt,scholze2022liquid}.

We instead propose \emph{Pseudo-Formalization} (PF), which aims
to retain the modularity and precision that make formal
proofs verifiable while preserving the flexibility of natural
language, such as the ability to cite
existing results, appeal to standard techniques, and rely
on the rich vocabulary mathematicians already use. A
Pseudo-Formal proof is decomposed into modules,
each stating its premises, conclusion, and proof of the
conclusion in natural language (see \Cref{fig:figure_1}). Given a proof, we take three
steps to verify it. First, an LLM translates it to PF. 
Then, the same model verifies each module, and a calibration model aggregates the errors into a verdict under a configurable strictness threshold, an algorithm we call Block Verification (BV). We refer 
to the entire verification pipeline as PF+BV.

Pseudo-Formalization can be viewed as a relaxation of
autoformalization, where a natural language proof
is translated to a fully formal language and then checked for correctness.
Both approaches use an LLM to translate
a proof into a format that is easier to verify. Autoformalization
targets a formal language like Lean, paired with a compiler. PF instead
targets a structured natural-language representation, paired with an LLM
running Block Verification. Thus the Block Verification LLM plays
an analogous role to a compiler for a fully formal language.

Evaluating LLM math verifiers requires realistic and challenging benchmarks.
We test our method on two benchmarks covering different mathematical domains,
one of which is a new benchmark based on arXiv papers we introduce.
The first is 
\texttt{Hard2Verify}~\citep{pandit2025hard2verify}, a benchmark 
of Olympiad and Putnam level competition-math problems paired
with AI generated solutions and step-level correctness labels. 
The second
benchmark is built from full proofs drawn
from arXiv math papers that contained errors later corrected
by the authors, providing a realistic test of verification
on research mathematics. We refer to it as
\texttt{ArxivMathGradingBench}. Across 
all benchmarks, we find that PF+BV
pareto-dominates the baseline 
of evaluating the original natural language proof. 

Overall, our contributions can be summarized as follows:

\begin{enumerate}
    \item We introduce Pseudo-Formalization as an easy to verify proof format, and Pseudo-Formalization + Block-Verification (PF+BV) as 
    an automatic proof verification algorithm.
    \item We evaluate PF+BV on two benchmarks covering Olympiad and research-level mathematics, where it pareto-dominates strong LLM-as-judge baselines on error-finding precision and recall.
    \item We release \texttt{ArxivMathGradingBench}\footnote{\texttt{ArxivMathGradingBench} is available at \url{https://huggingface.co/datasets/LukeBailey181Pub/ArxivMathGradingBench}.} a new benchmark of 35 mathematics research papers with 40 known errors
    in total.
\end{enumerate}

\section{The Pseudo-Formal Framework}

We first define Pseudo-Formal proofs, and then
state, with theoretical justification, desirable
properties of Pseudo-Formal proofs.

\begin{definition}[Pseudo-Formal proof]
\label{def:pf-proof}
Let a proof module be a tuple $(P, c, \pi)$ where $P$ is the set of premises, $c$ is the conclusion, and $\pi$ is the proof, which are all represented as natural language, that is, a sequence of characters. 

A \emph{Pseudo-Formal proof} is a sequence of modules $V=\{v_i = (P_i, c_i, \pi_i): i\in [n]\}$, equipped with a directed acyclic graph (DAG) $G$ for the invocation dependency and a forest $T$ for scope inheritance, whose node sets are both $V$. Here $G$  contains an edge
from $v_i$ to $v_j$ exactly when $\pi_i$ invokes the conclusion $c_j$, and 
the scope-inheritance forest $T$ contains 
$(v_i, v_j)$ exactly when $v_i$ inherits the scope (premises) of $v_j$ (as typically specified in the premises of $v_i$). For clarity, if $\operatorname{Anc}_T(v_i)$
denotes the set of ancestors of $v_i$ in $T$, then the premises available to
$\pi_i$ are $P_i \cup \bigcup_{v_j \in \operatorname{Anc}_T(v_i)} P_j$.

\end{definition}

Our proof format is intended to be both unambiguous and
modular: modularity comes from decomposing the proof into parts, while unambiguity comes from requiring each node to explicitly state
its premises, conclusion, proof, and dependencies.
The two associated graphs capture complementary structure in
a well-written human proof. The dependency graph $G$ 
records which proof modules invoke the conclusions of which others. We insist that it needs to be a DAG because this is a minimal structural requirement for the correctness of a proof (so that there is no circular argument).  The scope-inheritance forest $T$ hierarchically organizes the proofs into propositions, lemmas, and
finer-grained claims, and allows the sharing of context within each
scope. For example, after a statement such as ``In this section, let $S$ be
a vector space,'' subsequent claims can use $S$ without restating it. We insist that the scope-inheritance graph $T$ is a forest so that no proof module can inherit from two parents which may result in conflicting scopes.

The two graphs $G$ and $T$ are generally distinct but closely related. For instance, the proof of a proposition typically depends on the lemmas
introduced within its scope, which appear as its children in $T$. We note that our dependency graph is similar in spirit to a Lean blueprint~\citep{zhu2026leanarchitectautomatingblueprintgeneration} used in autoformalization.

The structure in the Pseudo-Formal proof induces a simple \textbf{block-wise verification algorithm}. We can verify each
node of $G$ in reverse order of its topological sort. In this way,
when verifying a node, we have already verified that all of its children
are correct, and thus have that any child the node invokes is correct.
To verify a node $(P_i, c_i, \pi_i)$ of the Pseudo-Formal proof we
simply (1) realize the full premises $P_{\text{full}}$ of the
node using the inheritance forest $T$, and (2) check that
$\pi_i$ serves as a valid proof of $c_i$ from $P_\text{full}$,
specifically checking that whenever $\pi_i$ invokes the conclusion
of a child node, it has proven the premises of said child hold
before doing so. 

However, a Pseudo-Formal proof may not always be modular and unambiguous: one could place an
entire complex proof inside a single node. We identify two desiderata that make a Pseudo-Formal proof easier to verify. 
First, \emph{conciseness}: each node, including its premises, conclusion,
and proof, should be short and invoke only a small number of other nodes in
$G$. Second, \emph{bounded scope depth}: the scope-inheritance forest $T$
should be shallow, so that the definitions and assumptions available to any
node can be recovered from only a short chain of ancestors. Together, these
conditions ensure that verifying each node requires only a \emph{small
amount of context}, rather than the context needed to verify the entire
proof at once. We call Pseudo-Formal proofs that adhere to these two desiderata \emph{Good Pseudo-Formal Proofs}, and provide formal justification that all 
\emph{Good Pseudo-Formal Proofs} are in principle easy to verify in~\Cref{thm:context-informal}.

\begin{theorem}[Context cost of block verification]
\label{thm:context-informal}
Fix a block-wise
verification algorithm $A$ (\Cref{def:block-wise-verifier}).
There exists a fixed-size Transformer $T_1$ such that for any good Pseudo-Formal proof $P$ (\Cref{def:good-pf-proof}),
$A$ on $P$ can be simulated by aggregating the results of $O(n)$ calls to $T_1$, with context length independent of $n$.
\end{theorem}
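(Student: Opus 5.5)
The plan is to make the informal content of \Cref{thm:context-informal} concrete through the definitions it cites. A block-wise verifier $A$ (\Cref{def:block-wise-verifier}) is presumably a local predicate $\mathrm{check}(P_{\text{full}}(v_i), c_i, \pi_i, \{c_j : (v_i,v_j)\in G\})$ together with an aggregation rule (conjunction, or a thresholded count) over all nodes. A good Pseudo-Formal proof (\Cref{def:good-pf-proof}) presumably bounds three quantities by constants independent of $n$: the length $L$ of each node's text, the out-degree $d$ of each node in $G$, and the depth $h$ of $T$.

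\textbf{Step 1: bound the local context.} For each node $v_i$, I will build a context string $x_i$ by concatenating:
\begin{itemize}
\item the premises $P_j$ for all $v_j \in \Anc_T(v_i)$, at most $h$ ancestors;
\item the node's own triple $(P_i, c_i, \pi_i)$;
\item the conclusions and premises of its at most $d$ children in $G$.
\end{itemize}
This gives $|x_i| \le (h + 1 + 2d)L =: C$, a constant. By the definition of a block-wise verifier, $A$'s decision on $v_i$ depends only on $x_i$. So the local check is a fixed function $f$ on strings of length at most $C$.

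\textbf{Step 2: realize $f$ by a fixed Transformer.} Any function on a finite domain of bounded-length strings over a finite alphabet can be computed exactly by a fixed-size Transformer. One route is a lookup-table construction: positional encodings identify each of the $C$ positions, attention layers gather the whole string into one position, and a sufficiently wide MLP memorizes the finite table. Alternatively, if \Cref{def:block-wise-verifier} already posits that the local check is computed by a Transformer of bounded size, I will invoke that directly. Either way $T_1$ depends only on $C$ and on $A$, not on $n$.

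\textbf{Step 3: aggregate.} I will call $T_1$ on each $x_i$, in reverse topological order of $G$ if $A$ requires it. Each call is independent because children are taken as given, so this uses exactly $n = O(n)$ calls. The outputs are then combined with $A$'s aggregation rule, e.g.\ accept iff all local checks accept, or a thresholded error count. Constructing the $x_i$ requires looking up ancestors and children, which is external bookkeeping rather than Transformer context. If the statement counts that bookkeeping as part of the simulation, I will note that it takes $O(n(h+d))$ time.

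\textbf{Main obstacle.} The hard part is Step 2 together with the precise form of the definitions. I need that $A$'s per-node decision is genuinely a function of the bounded local context. In particular it must not read unbounded global state, such as the verdicts of all descendants, which would be transitive rather than local. If \Cref{def:block-wise-verifier} allows dependence on children's verdicts, I will pass only the $d$ direct children's bits as extra input tokens. Global consistency is then recovered by the aggregation step, since a node is valid iff its local check passes and all of its descendants are valid, which the conjunction over all nodes captures.
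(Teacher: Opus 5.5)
Your skeleton matches the paper's proof. You bound each module context $C_v(P)$, simulate the module verifier with one fixed Transformer, make one call per module, and accept iff all calls accept. Two points differ.

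\textbf{Out-degree.} \Cref{def:good-pf-proof} bounds only the depth $D$ of $T$ and the lengths by $L$. It does not bound the out-degree of $G$, so your constant $d$ has to be derived, not assumed. The paper does this in one line: each invoked module must be referenced by at least one character of $\pi_v$, so $|\Dep_G(v)|\le L$. Separately, the formal module context contains only the children's conclusions, not their premises. Including the premises only changes the constant.

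\textbf{Step 2.} This is the more substantive difference. You realize $V_{\mathrm{mod}}$ by a lookup-table Transformer on the finite set of strings of length at most $C$. The paper instead views $V_{\mathrm{mod}}$ as a fixed Turing machine running in linear time. It then applies Theorem 2 of \citet{merrill2023expressive}, which gives a fixed decoder-only Transformer simulating that machine on inputs of any length, using a number of decoding steps linear in the input.

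Your route needs no computability or running-time assumption on $V_{\mathrm{mod}}$, and it answers in a single forward pass with context exactly $C$. However, your $T_1$ depends on $L$ and $D$; its size can be exponential in $C$. So it is \emph{fixed} only once the goodness constants are fixed.

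The paper's route gives a single $T_1$ that depends only on $V_{\mathrm{mod}}$ and works uniformly for good proofs with any $D$ and $L$. The cost is chain-of-thought tokens. That is why the formal version (\Cref{thm:context-formal}) bounds the context by $c_A L(L+D+1)$, with $c_A$ depending only on $A$.

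If $D$ and $L$ are read as fixed constants in \Cref{thm:context-informal}, your argument proves the statement once the degree bound is supplied. The paper's argument proves the stronger, uniform version.
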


\Cref{thm:context-informal} states that block-wise verification
of a Good Pseudo-Formal proof requires context length independent of the size of the proof. This contrasts with line-by-line whole-proof verification by a
single fixed-size Transformer call: the proof input itself requires $O(n)$ context length, and the best construction requires a chain-of-thought of $O(n)$ length~\citep{merrill2023expressive}. By moving
from one call whose input grows linearly with proof length to many $o(n)$ token calls, modularization can place every verification step in the short-context regime.

The context length bound presented here should be viewed as a theoretical tool (adopted from~\citep{yang2025pencillongthoughtsshort,yang2026recursive}) to model context rot, the phenomenon
in which LLM performance on a task
decreases with input context size \citep{hong2025context,zeng2026loca}.
The theorem states that when verifying
a Good Pseudo-Formal Proof, the context
required for each individual verification
call is much smaller than that required
to verify a standard natural language proof.
Thus, due to context rot,
we expect the verification performance
on each node to be higher than a single large
verification call on an entire proof.

\begin{figure}[t]
    \centering
    \resizebox{0.86\linewidth}{!}{%
        \begin{tikzpicture}[
  font=\small,
  block/.style={
    draw=black, fill=white, rounded corners=1pt,
    align=left, inner sep=2pt, text width=112mm,
    minimum height=5.5mm
  },
  root/.style={
    block, draw=blue!70!black, fill=blue!6, line width=0.8pt,
    font=\bfseries\small
  },
  collapsed/.style={
    block, draw=black!65, fill=gray!8, text width=108mm,
    font=\small
  },
  prop/.style={
    block, draw=blue!55!black, fill=blue!3, text width=108mm,
    font=\small
  },
  lemma/.style={
    block, draw=black!75, fill=white, text width=103mm,
    font=\scriptsize
  },
  expanded/.style={
    lemma, text width=100mm, minimum height=9mm
  },
  bad/.style={
    expanded, draw=red!75!black, fill=red!8, line width=1pt
  },
  verifier/.style={
    draw=red!75!black, fill=red!4, rounded corners=2pt,
    align=left, text width=100mm, inner sep=3pt, font=\scriptsize
  },
  edge/.style={draw=black!75, line width=0.55pt},
  warn/.style={-{Stealth[length=2mm]}, draw=red!75!black, dashed, line width=0.8pt}
]
\node[
  root,
  anchor=north west
] (header) at (0,0) {
  $\blacktriangledown$ Thm: For $P$ satisfying a functional equation (omitted),
  $S=\{P(a)+P(-a):a\in\mathbb Q\}$, prove $S$ is finite and $\max |S|=2$.
};

\node[collapsed, anchor=north west] (p1) at ($(header.south west)+(7mm,-2mm)$) {
  $\blacktriangleright$ Prop. 1: Basic properties of $P$ \hfill {\scriptsize block score $7$}
};

\node[collapsed, anchor=north west] (p2) at ($(p1.south west)+(0mm,-1.5mm)$) {
  $\blacktriangleright$ Prop. 2: $L(a,b)\in\{0,-s(b)\}$ \hfill {\scriptsize block score $7$}
};

\node[collapsed, anchor=north west] (p3) at ($(p2.south west)+(0mm,-1.5mm)$) {
  $\blacktriangleright$ Prop. 3: Cauchy difference constraints for $D(x,y)$ \hfill {\scriptsize block score $7$}
};

\node[prop, anchor=north west] (p4) at ($(p3.south west)+(0mm,-2.5mm)$) {
  $\blacktriangledown$ Prop. 4: Structural properties of $S$ \hfill {\scriptsize block score $7$}
};
\node[expanded, anchor=north west] (l42) at ($(p4.south west)+(7mm,-1.5mm)$) {
  \textbf{Lem. 4.2 / P2.}\\
  \textbf{Statement.} If $d\in S$ and $d\ne0$, then $-d\notin S$. 
};

\node[prop, anchor=north west] (p5) at ($(l42.south west)+(-7mm,-2.5mm)$) {
  $\blacktriangledown$ Prop. 5: Upper bound, $S$ is finite and $|S|\le2$
};
\node[bad, anchor=north west] (l51) at ($(p5.south west)+(7mm,-1.5mm)$) {
  \textbf{Lem. 5.1.}\\
  \textbf{Assumptions / Conditions / Definitions.}
  Same as in Prop. 5 and Prop. 4 available.\\
  \textbf{Statement.}
  If $S^*=S\setminus\{0\}$ then $|S^*|\le1$
  (i.e. there is at most one nonzero element of $S$).\\
  \textbf{Proof.}
  By P2 of Prop. 4, all nonzero elements of $S$ must have the same sign;
  WLOG every element of $S^*$ is positive. \ldots
};

\node[verifier, anchor=north west] (resp) at ($(l51.south west)+(7mm,-1.5mm)$) {
  \textbf{Block verifier response.}\\
  The proof begins with an incorrect/unjustified deduction:
  ``By P2 of Prop. 4, all nonzero elements of $S$ must have the same sign;
  without loss of generality assume every element of $S^*$ is positive.''
  This does not follow from P2. Prop. 4 P2 only states that if $d\in S$
  and $d\ne0$, then $-d\notin S$, i.e. $S$ cannot contain a nonzero element
  together with its exact negative. It does not forbid the coexistence of
  distinct nonzero elements of opposite signs (for example, P2 allows both
  $2$ and $-3$ to be in $S$, since $-3$ is not equal to $-2$). Thus the
  claim that all nonzero elements have the same sign is unjustified.
};
\node[
  draw=red!75!black, fill=red!75!black, circle,
  inner sep=1.7pt, text=white, font=\bfseries\small,
  anchor=center
] at ($(resp.north east)+(-2.2mm,-2.2mm)$) {0};

\node[collapsed, anchor=north west] (p6) at ($(resp.south west)+(-14mm,-2.5mm)$) {
  $\blacktriangleright$ Prop. 6: Construction, $P(x)=2\lfloor x\rfloor-x$ gives $S=\{0,-2\}$
  \hfill {\scriptsize}
};

\draw[edge] ($(header.south west)+(2mm,0)$) |- ($(p1.west)+(-2mm,0)$);
\draw[edge] ($(header.south west)+(2mm,0)$) |- ($(p2.west)+(-2mm,0)$);
\draw[edge] ($(header.south west)+(2mm,0)$) |- ($(p3.west)+(-2mm,0)$);
\draw[edge] ($(header.south west)+(2mm,0)$) |- ($(p4.west)+(-2mm,0)$);
\draw[edge] ($(p4.south west)+(2mm,0)$) |- ($(l42.west)+(-2mm,0)$);
\draw[edge] ($(header.south west)+(2mm,0)$) |- ($(p5.west)+(-2mm,0)$);
\draw[edge] ($(p5.south west)+(2mm,0)$) |- ($(l51.west)+(-2mm,0)$);
\draw[edge] ($(header.south west)+(2mm,0)$) |- ($(p6.west)+(-2mm,0)$);

\draw[warn] (l42.east) to[bend left=16]
  node[midway, right, font=\scriptsize, align=left] {misread as\\``same sign''}
  (l51.east);

\end{tikzpicture}
    }
    \caption{
        \textbf{An example Pseudo-Formalized Proof on IMO 2024 P6}. 
        The block verifier correctly assigns Lemma~5.1
        score $0$ because the proof incorrectly strengthens
        P2 from Lemma~4.2
        ``$S$ cannot contain both $d$ and $-d$'' to ``all nonzero
        elements of $S$ have the same sign''.
    }
    \label{fig:imo-pf-tree-gb0813}
\end{figure}

\section{Pseudo-Formal Verification}

We now outline the process by which we take a natural-language proof $\Pi$
and use Pseudo-Formalization to verify its correctness.
We do so in four steps. (1)~An LLM
translates $\Pi$ into a Pseudo-Formal proof $\hat{\Pi}$, with an
optional self-repair loop that compares $\hat{\Pi}$ against $\Pi$ and
patches discrepancies. (2)~A block-verification LLM checks each
module of $\hat{\Pi}$ independently, producing a per-module error
report. (3) A calibration LLM aggregates these per-module errors
into an accept/reject decision. (4) All prior steps are 
run in parallel $k$ times and
the resulting verdicts are combined pessimistically (a proof 
only passes if all rollouts accept it).
Each step
calls a single off-the-shelf
LLM with no fine-tuning. The remaining subsections describe each
step in detail.

\subsection{Step 1: Translation to Pseudo-Formal}

The translation LLM receives the
natural-language proof $\Pi$, and is prompted to produce a Pseudo-Formal
proof $\hat{\Pi}$ in the format of
Definition~\ref{def:pf-proof}.
The translation process can create a Pseudo-Formalization
that is not faithful to the original proof. To mitigate this, we run
an optional
\emph{self-repair} loop. For $k$ iterations, an LLM is shown
$\Pi$ together with the current $\hat{\Pi}$ and is asked to identify
discrepancies and produce a patched version, or, if there are no
discrepancies, exit the loop. Self-repair is not fundamental to our
method, but it is a simple and effective way to improve the quality of the translation.

\subsection{Steps 2 \& 3: Block verification and calibration}

\paragraph{Block verification.}
Given $\hat{\Pi}$, the block-verification LLM is invoked once per
node $v = (P, c, \pi)$. The LLM is shown the node's
full premises ($P$ combined with all ancestor premises
derived from $T$), conclusion, and proof, the
premises and conclusions of any child nodes,
and is asked whether $\pi$
correctly establishes $c$, assuming all child nodes are correct.
It returns either an accept
verdict or a description of the error (see \Cref{fig:imo-pf-tree-gb0813}).

\paragraph{Calibration.}
Aggregating per-module error reports into a single verdict is
non-trivial. The block verifier may flag genuine reasoning errors,
but it may also raise minor objections, such as typos or
harmless ambiguities. In some cases we may want to catch such
errors (before publishing a paper, for example) but in others
we may not. We delegate the aggregation of
errors to a
\emph{calibrator} LLM, which receives the full list of per-module
error reports together with a natural language description
of the desired strictness threshold, and emits a final
verdict. This verdict could be a binary correctness label,
a numerical score or something more expressive (in \Cref{sec:arxiv}
we have the calibrator output specific locations of errors within
the original natural language proof).
The calibration step allows
a downstream user to tailor the verification process to their needs.

\subsection{Step 4: Parallel scaling}

Steps 1--3 are stochastic and can be run independently $k$ times, with the
resulting verdicts combined into a final output. Following
\citet{huang2025pessimistic}, who demonstrate that pessimistic aggregation
is effective for math verification, we accept a proof only if every one of
the $k$ runs accepts it (a single flagged error suffices to reject).
Framing proof verification
with ``contains error'' as the positive class, increasing
$k$ can only add flagged errors and never remove them, so false positives
increase and false negatives decrease as $k$ grows. Consequently,
recall tends to rise with $k$ while precision tends to fall. Scaling $k$
thus exposes a family of operating points that trade recall against
precision, and in our experiments we sweep $k$ and plot the resulting
curves to characterize this tradeoff.
Different downstream settings prioritize precision or recall to varying
degrees, with pessimistic parallel scaling offering a method for
navigating the tradeoff.

\section{Experimental Results}

We evaluate Pseudo-Formal Verification on two
benchmarks of two different mathematical difficulties and proof 
styles: competition math (at the level of International Math 
Olympiad and Putnam) and full research-paper proofs
drawn from arXiv. For competition math, we
take the existing benchmark \texttt{Hard2Verify}. For
full research papers, we create a new dataset called \texttt{ArxivMathGradingBench} based on in-the-wild arXiv revision comments.

\begin{figure}[t]
    \centering
    \includegraphics[width=\linewidth]{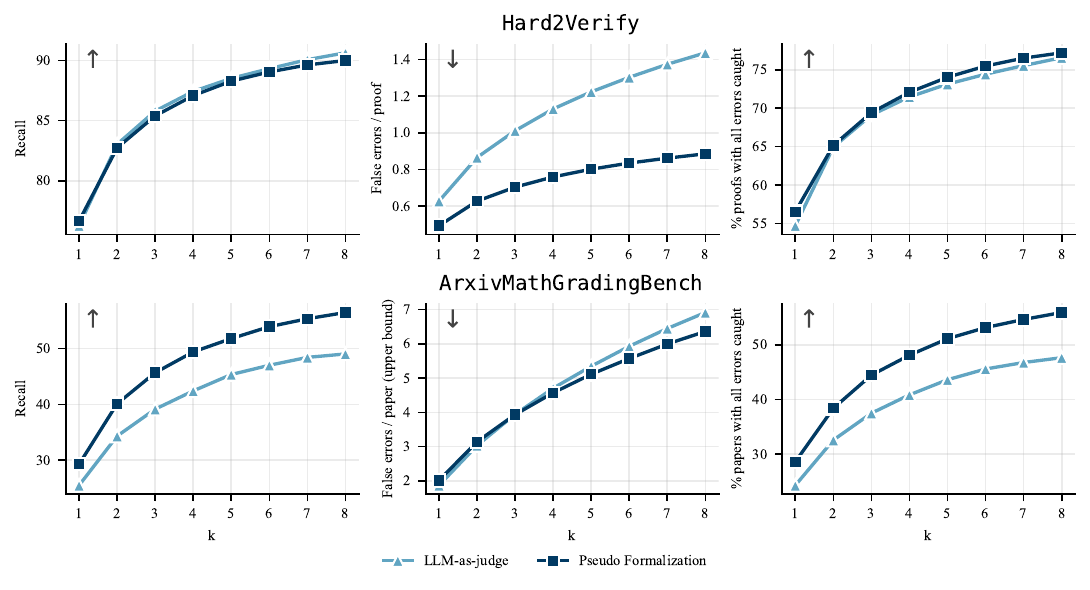}
\caption{Per-proof/paper and per-error metrics on \texttt{Hard2Verify} and \texttt{ArxivMathGradingBench}. Here the x axis is always the number of parallel verification attempts $k$. 
    \textbf{Left: Recall on the step level.} We perform similarly to the baseline on \texttt{Hard2Verify}, both recovering around $90\%$ of the errors and significantly outperforming baseline on the harder and longer dataset of \texttt{ArxivMathGradingBench}. 
    \textbf{Middle: Mean number of false errors per proof/paper (lower is better).} We define false errors as predictions
    that do not match any annotated ground-truth error. This is  an \emph{upper
  bound} on the false-alarm rate on \texttt{ArxivMathGradingBench}, as we only annotate
  errors disclosed by the authors. We outperform baseline on both datasets.
  \textbf{Right: Coverage of true errors per proof/paper.} We define coverage as the portion of proofs/papers in which the verifier covers \emph{every} ground-truth error, in which we match baseline for \texttt{Hard2Verify} and exceed 
  it on \texttt{ArxivMathGradingBench}.}
    \label{fig:arxiv_results}
\end{figure}

\subsection{Competition Math}
\label{sec:competition-math}

We test our method on \texttt{Hard2Verify}~\citep{pandit2025hard2verify} an existing proof-verification benchmark at the competition-math level.
It consists
of 200 proofs generated by frontier AI models on 80 
olympiad-level and Putnam problems, paired with step-level
binary correctness labels for each proof. An entire proof is
labeled as correct if all steps are labeled correct, and incorrect 
otherwise.

\textbf{Metrics.} Following~\citet{pandit2025hard2verify}, a verification system
being evaluated on \texttt{Hard2Verify}
accepts a problem and a proposed solution split into indexed steps,
and returns a binary correctness prediction for each step.
Since the benchmark annotates every step, we evaluate the verifier's
ability to identify erroneous steps using precision and recall.
For each proof $x$ in dataset $\mathcal{D}$, let
$Y^x$ denote the set of ground-truth
incorrect step indices, and let
$\hat{Y}^x$ denote the set of step
indices predicted to be incorrect by the verifier. We define
\begin{align}
\label{eq:metric}
\text{TP}^{\text{step}} = \sum_{x \in \mathcal{D}}|\hat{Y}^x\cap Y^x|, \quad
\text{FP}^{\text{step}} = \sum_{x \in \mathcal{D}} 
|\hat{Y}^x \setminus Y^x|,\quad
\text{FN}^{\text{step}} = \sum_{x \in \mathcal{D}}  | Y^x \setminus \hat{Y}^x|
\end{align}

We report the step-wise precision and recall as
$\text{Precision}^{\text{step}} = \text{TP}^{\text{step}}/(\text{TP}^{\text{step}}+\text{FP}^{\text{step}})$ and
$\text{Recall}^{\text{step}} = \text{TP}^{\text{step}}/(\text{TP}^{\text{step}}+\text{FN}^{\text{step}})$.

We additionally report proof-level metrics by aggregating step-level
predictions. A proof is labeled
incorrect if $Y^x \neq \emptyset$, and correct otherwise; similarly,
the verifier predicts proof $x$ to be incorrect if
$\hat{Y}^x \neq \emptyset$. Applying this mapping to every proof gives
proof-level TP, FP and FN counts, from which we compute
proof-level precision and recall with the same formulas as above. We will denote these terms as $\text{Precision}^{\text{proof}}$ and $\text{Recall}^{\text{proof}}$. 
Ideally, a verification system should be able to find all the right errors in one proof without flagging any fake errors. This is already reflected partially in the step level precision and recall but we further perform a fine-grained proof level breakdown. We track the coverage metric per proof $\text{Coverage}^{\text{proof}} = \mathrm{E}_{x \sim \mathcal{D}}[ \mathbf{1}(Y^{x} \subseteq \hat Y^{x}) \mid Y^x \neq \emptyset]$ and the number of false errors per proof $\text{\#FE}^{\text{proof}} =  \mathrm{E}_{x \sim \mathcal{D}} [|\hat Y^x \setminus  Y^x|]$.

\textbf{Methods.} For baseline, we use direct LLM-as-judge 
using the same prompt as in the \texttt{Hard2Verify} paper.
The model is prompted with the problem and solution broken
up into steps, and asked to identify which steps have errors. 
This provides step-wise error predictions. If any step is 
predicted to be incorrect, then the proof as a whole is 
predicted to be wrong.
PF+BV instead rewrites the proof into the 
format of \Cref{def:pf-proof}, verifies each module, and then 
calibrates the block-level reports back into the benchmark's output format.
We do not provide the step-wise decomposition to the translation LM 
that produces the PF proof. We do however provide it to the calibration
LM, and ask the calibrator to identify from the block errors, which 
of the original steps were correct and incorrect. This way, PF+BV 
provides the same exact type of output as the baseline.
We perform parallel scaling by pessimistically aggregating predictions over each step across
$k=8$ independent runs. This means that we predict a step as incorrect 
if it was predicted as incorrect in any of the parallel runs, for 
PF+BV and the baseline.
We use GPT-5.4 mini for all the results here (cf. more details in \Cref{app:hard2verify-prompts}, we 
also test more models in
\Cref{fig:pr_diff_models}).

\begin{wrapfigure}{r}{0.5\textwidth}
    \centering
    \includegraphics[width=0.48\textwidth]{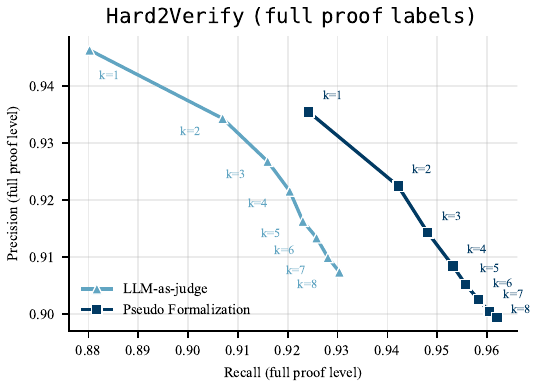}
    \caption{Precision-recall on \texttt{Hard2Verify} at the
    \textbf{whole-proof label level} (positive class: proof contains
    1 or more incorrect steps). For each $k$, the 
    verifier's per-rollout proof-level verdicts are aggregated with
  pessimistic-min (the proof is predicted incorrect iff at least one
  1 or more steps over the $k$ rollouts is predicted incorrect).}
    \label{fig:pr_full_proof}
    \vspace{-5mm}
\end{wrapfigure}

\textbf{Quantitative Results.} \Cref{fig:figure_1} top right shows the 
effect on $\text{Precision}^{\text{step}}$ and $\text{Recall}^{\text{step}}$ when we scale the number of
independent runs on \texttt{Hard2Verify}.  
Our algorithm achieves a better Pareto 
precision--recall tradeoff compared to the baseline. 
Top left of \Cref{fig:arxiv_results} shows the recall with 
respect to $k$ where we perform comparable to the baseline. 

\Cref{fig:pr_full_proof} shows the full proof precision 
recall curve (where a proof is labeled wrong if any step 
is wrong, and predicted wrong if any step over $k$ rollouts 
is predicted wrong). We see our method outperforms in terms 
of recall by a large margin, but does not dominate in terms 
of precision. However, 
top middle of \Cref{fig:arxiv_results} shows that our $\#\mathrm{FE}^{\mathrm{proof}}$ is actually lower. We flag $60\%$
fewer false steps per proof compared to baseline at $k=8$. Top right of \Cref{fig:arxiv_results} shows that our coverage, the portion of proofs where our verifier flags \emph{every} ground truth error, is also higher than baseline.

\textbf{Qualitative Analysis.} \Cref{fig:imo-pf-tree-gb0813} illustrates the mechanism on a
representative successful run. In this solution, an earlier proposition
shows only that $S$ cannot contain both $d$ and $-d$. The later proof of
Lemma~5.1 uses that statement as though it implied the stronger claim that
all nonzero elements of $S$ have the same sign. The Pseudo-Formal rewrite
isolates this jump into a local block: the block verifier can then compare the proposition isolated for it in context against the lemma proof and flag the strengthening
without re-verifying the whole solution at once.

\subsection{Research Papers}
\label{sec:arxiv}

To stress-test verification on full research-paper proofs, we
construct a benchmark from arXiv math papers
that contained errors later corrected by their authors.
We filter all math papers posted to arXiv in the first
seven months of 2025 and identify
those that were updated with comments that explicitly mention fixing
a flawed lemma, theorem, or proposition. We do so using first a regex and
then GPT 5.4 nano 
(see Appendix \ref{app:arxiv-paper-filtering}
for more details).
For each paper, we use the PDF as 
the verifier input.
Using this procedure and 
searching over math arXiv papers 
from the first 6 months of 2025 
we are able to collect 35
examples of papers with a known specific error
location, which we refer to as
\texttt{ArxivMathGradingBench}. Each entry
of this dataset is a tuple containing (1) the paper PDF, and (2) the
location of error(s) in the paper (e.g. ``Lemma 1.1'' or ``Theorem A, Proposition A.''). We provide examples of three dataset
entries in \Cref{tab:arxiv-examples}.

\texttt{ArxivMathGradingBench} allows us to test two
important properties of our verification method.
Firstly, as the dataset is made from real
research-level papers, success on it directly
transfers to real-world applications (aiding
the referee process of mathematics papers, and
as a reward signal for reinforcement learning on frontier
models). Secondly, like \texttt{Hard2Verify}, 
each entry
has a specific known error location. This allows
us to test the ability of verifiers to
find the specific error in a long research level
proof.

\begin{table}[t]
\centering
\small
\caption{Example rows from the \texttt{ArxivMathGradingBench} dataset.
Each row is a published arXiv math paper whose author later flagged and
corrected an error in a subsequent version. The \emph{Comment} column
is the author's revision note verbatim, and \emph{Error Location}
gives the error location extracted from the comment.}
\vspace{4mm}
\label{tab:arxiv-examples}
\begin{tabularx}{\linewidth}{@{}p{3.3cm} p{2.7cm} X p{1.8cm}@{}}
\toprule
\textbf{Title} & \textbf{Category} & \textbf{Comment} & \textbf{Error Location} \\
\midrule
Power-free palindromes and reversed primes \citep{chourasiya2025power}
& math.NT \newline (Number Theory)
& 14 pages, 1 figure, Minor corrections to Theorem 1.5 and Lemma 2.1.
& Theorem 1.5, Lemma 2.1 \\
\addlinespace
Real algebraic surfaces biholomorphically equivalent but not algebraically equivalent \citep{rond2025real}
& math.CV \newline (Complex Variables)
& The proof of Lemma 5 is incorrect: we do not have $H(M_{\mathcal S}) \subset M_\infty$ since $M_\infty$ is defined by 2 equations and $\mathrm{Im}(w)=0$ has not been considered in the proof. This affects the main result, and we do not know whether it is true.
& Lemma 5 \\
\addlinespace
Finite codimension stability of invariant surfaces \citep{forni2025finite}
& math.DS \newline (Dynamical Systems)
& Lemma 4.1 has been corrected after N. Tedesco pointed out a mistake in the calculations. The main results are unchanged.
& Lemma 4.1 \\
\bottomrule
\end{tabularx}
\end{table}

\textbf{Metrics.} A verification system
being evaluated on \texttt{ArxivMathGradingBench}
accepts the PDF  of a paper,
and then returns a list of error locations.
For each returned error location, we use
an LLM judge (GPT 5.4 mini) to decide
if this matches
any of the ground-truth error location(s) (see Appendix \Cref{sec:app_arxiv_judge} for details). 
This is different from the step-wise prediction in \texttt{Hard2Verify} as there is no natural concept of step here. 
In this setting, where every dataset entry
has a variable number of error locations
that the verifier must predict, we consider
the analogue of precision and recall as
our primary evaluation metrics.
For each problem $x$ in dataset $\mathcal{D}$,
with ground-truth error locations $ Y^x = \{y_i^x\}_{i=1}^{k_x}$
and verifier predicted errors $ \hat{Y}^x = \{\hat{y}_i^x\}_{i=1}^{m_x}$. We can then define
metrics including the step-wise precision and recall in~\Cref{eq:metric} in the same way as the previous section.

Importantly, because $Y^x$ is a \emph{subset} 
of possible errors in the proof (the author may not 
know about other errors, or may have corrected them but 
not annotated this in the arXiv comment), our False Positive values
are \emph{upper bounds} on the true number of False
Positives (the verifier may be identifying a valid error
that is simply not reflected in $Y_x$) meaning 
that our precision values are \emph{lower bounds} on the 
true values.

\textbf{Methods.} 
As in the prior section, we use GPT-5.4 mini for the baseline
and all stages of our method. 
We modify the 
calibrator and baseline prompts to output a list of 
errors and their locations in the original 
PDF instead of binary classification per step (see Appendix \ref{app:arxiv-prompts} for more details).
For parallel 
scaling, we run $k$ independent verification 
runs, and take the union of all errors (by the 
error location) found 
in each run (thus deduplicating error locations 
found in multiple parallel runs).

\textbf{Evaluation.}
\Cref{fig:arxiv_results,fig:figure_1} shows
verifier performance on
\texttt{ArxivMathGradingBench}. Firstly 
we note the benchmark is very challenging, as we would 
expect. Even with $k=8$ rollouts both verifiers leave
roughly $50\%$ of annotated ground-truth errors uncaught and predict
several errors per paper that do not match any annotation
(low precision in Figure \ref{fig:figure_1}). Some of
this apparent 
\begin{wrapfigure}{l}{0.5\textwidth}
    \centering
    \includegraphics[width=0.48\textwidth]{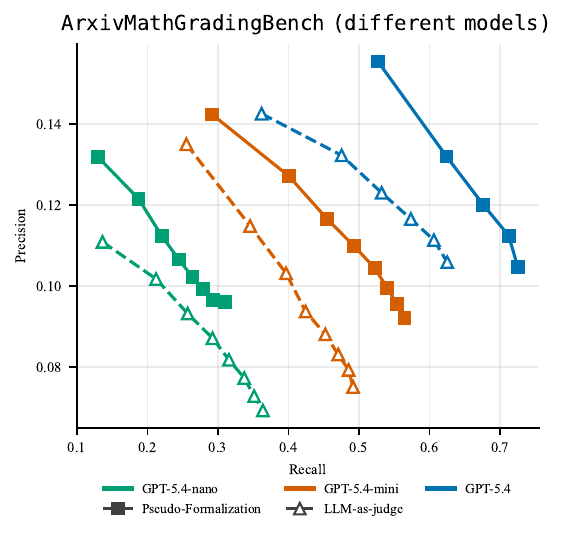}
    \caption{Precision-recall when  
    using different base models. 
    The benefit of PF persists as the 
    model improves. Each point 
    from top left to bottom right 
    indicates increasing 
    $k$ by 1.}
    \label{fig:pr_diff_models}
    \vspace{-3mm}
\end{wrapfigure}
overflagging is unavoidable, since
labels are for specific mistakes that the
authors disclosed in revision comments. Any genuine-but-unannotated
issues the verifier raises are also counted as non-matching, making
the middle panel of \Cref{fig:arxiv_results} an \emph{upper bound} on
the true false positive rate, and thus precision in \Cref{fig:figure_1}
a lower bound.
Even so, Pseudo-Formalization compares favorably to the
LLM-as-judge baseline on every axis we measure.
In \Cref{fig:figure_1}, we see a Pareto-dominant precision--recall lower-bound curve. In \Cref{fig:arxiv_results},
at every value of
$k$ the Pseudo-Formal verifier compared favorably with the baseline. It 
reaches higher per-step recall, raises fewer false alarms
(middle), and catches every annotated error in a larger
fraction of papers (right).

In \Cref{fig:pr_diff_models} we 
show how the performance 
of PF with different 
base models. For GPT 5 full to 
reduce cost of running PF we 
do not include any self-repair 
on the translation step.
We see 
across all of GPT 5.4 nano, mini, 
and full that PF outperforms the 
baseline LLM-as-judge considerably.
This demonstrates (a)
the benefits of PF 
are not isolated to GPT 5.4 mini, 
the main model we use throughout 
the paper, and (b)
the benefit
of PF \emph{composes} with 
model scaling, 
naturally desirable
property to have for the method.

\section{Related Works}

Prior work has studied the use of LLMs to verify
mathematical proofs. Some explore training 
verifiers using a combination of supervised fine-tuning and
reinforcement learning \citep{shao2025deepseekmath,shi2025heimdall,gao2025mathminos,chen2025spc,zhang2025generative,wang2025solvingverifying,liu2025trust}.
This direction is complementary to ours in that one can use 
training to improve the performance of a Pseudo-Formal verifier, 
although we leave this to future work.
Other work explores
how to integrate an LLM verifier into proof generation, both as a
training signal and as a guide for test-time search.
\citet{huang2025winning} build a model-agnostic
verification-and-refinement loop that, when paired with frontier
LLMs, solves five of six IMO 2025 problems.
\citet{mahdavi2025scaling} scale tournament-style pairwise comparison
between candidate proofs.
\citet{dang2026escapingcognitivewellefficient} notice the existence of a grader failure mode in the solver-grader loop called \emph{Cognitive Well}, where the grader fails to identify a basic mistake in a family of proof candidates. They addressed this issue through \emph{conjecture extraction}, isolating candidate lemmas from the generated proof and verifying them in a fresh context. 
\citet{zhao2025sample} identify a phenomenon they call
\emph{implicit scaling}, where as the candidate pool of proofs grows, 
self-verification accuracy rises in part because larger pools tend to contain proofs
that are more legible to the verifier.
Pseudo-Formalization aims at
the same effect more directly, by translating a single proof into a
form that is explicitly easier for the verifier to check. Closest to
our work is \citet{huang2025pessimistic}, who present 
a method of verifying fixed size chunks of $L$ lines 
of a proof in parallel. We share the motivation that verification 
focused on certain parts of the proof is beneficial (modularization),
but use Pseudo-Formalization to determine the chunks to verify.

\emph{Autoformalization} refers to the process of translating a
natural-language proof into a fully formal language such as Lean or
Isabelle using an LLM \citep{szegedy2020promising,wu2022autoformalization}. The
technique has been used to generate large-scale training data for
formal theorem-proving models \citep{lin2025goedel,lin2025goedelv2}
and to support ongoing efforts to formalize existing research-level
proofs \citep{kontorovichpnt}. Pseudo-Formalization shares
autoformalization's high-level recipe of using an LLM to translate a
proof into a more checkable format, but the goals differ.
Autoformalization aims at producing certainty that a proof is correct,
at the expense of tractability in many settings.
Pseudo-Formalization gives up the correctness
guarantee in exchange for being applicable to mathematics that
current formal libraries cannot express, and for letting the
verifier exploit the rich vocabulary and standard techniques that
mathematicians already use. We see the two approaches as
complementary. Where autoformalization is feasible, it provides
stronger guarantees; where it is not, Pseudo-Formalization offers an
alternative.

Pseudo-Formalization also connects to the literature on
\emph{scalable oversight}, which studies how weaker supervisors can
reliably evaluate the outputs of more capable models. A central idea
in this line of work is the prover-verifier game, in which a prover
is trained to produce outputs that a weaker verifier can check, with
the explicit objective of making the proofs legible
\citep{anil2021learning,kirchner2024prover}. Pseudo-Formal proofs
are a legible format by design. This makes Pseudo-Formalization a
natural target for training models to produce more legible outputs.

\vspace{-1.0em}
\section{Conclusion and Limitations}
\vspace{-0.5em}
In this work, we introduced Pseudo-Formalization (PF) and Block Verification (BV), a framework designed to automatically and reliably verify mathematical proofs. 
By bridging the gap between the flexibility of natural language and the modularity and unambiguity of fully formal languages, PF decomposes complex arguments into explicit, self-contained modules, and organizes them into semantically meaningful DAG structures. To facilitate robust evaluation in the latter setting, we introduced \texttt{ArxivMathGradingBench}, a challenging new benchmark comprising 35 mathematics research papers with known, author-corrected errors. We show theoretically and empirically that using PF+BV can be favorable in error identification compared to standard LLM-as-a-judge on both competition and research level math.

\subsection{Limitations and future work}
\label{sec:limitation}

\textbf{Translation to Pseudo-Formal}. \Cref{thm:context-informal} 
characterizes 
the difficulty of verifying a Good Pseudo-Formal proof, but takes
the existence of such a proof as given. In practice producing 
a faithful Pseudo-Formal rewrite is not trivial. While empirically we are still able 
to improve performance when translation to pseudo-formal 
is necessary, a natural direction is to train models to
produce proofs directly in Pseudo-Formal format. This connects to work on making model
outputs more legible to verifiers, including prover-verifier games
\citep{kirchner2024prover}, for which Pseudo-Formalization 
provides a legibility target. In this case, outputting in Pseudo-Formal format adds a small additional burden on the provers but this should be significantly easier than coming up with correct proofs.

\textbf{Application Area}. We evaluate PF+BV only on mathematical proofs. Mathematics is a natural starting point because its proofs are explicitly structured around premises, conclusions, and chains of justification. However, many domains outside mathematics (e.g. empirical sciences), share this structure, and we expect the core principles of modularization and explicit premise-conclusion decomposition to transfer to these settings, where a `pseudo-formal argument' would be easier to scrutinize than a natural language one. 

\textbf{Error labels}. We use error localization 
as a proxy for true error identification.
Future work should scale this
evaluation by having human mathematicians check whether
the verifier's stated reason matches the true error.
Verifiers that pass this stricter test are more
likely to generalize.

\section{Acknowledgments}

LB thanks the support of a Stanford Graduate 
and Vitalik Buterin Fellowship. KW thanks the support of the Stanford Graduate Fellowship.
TM thanks the support of NSF 2522743. MA thanks the support of NSF 2506145.
 We thank Neil Band, Caroline Choi, Thomas Chen, and Arvind Mahankali for feedback on an early draft of this work. 
We thank Marka Ellertson, Joshua Bailey, and Alexander Bailey 
for help with manuscript writing.

\newpage
\bibliographystyle{plainnat}
\bibliography{references}

\newpage
\appendix

\section{Context cost of verification: formal statement and proof}
\label{app:context}

\begin{definition}[Good Pseudo-Formal proof]
\label{def:good-pf-proof}
A \emph{good Pseudo-Formal proof} is a Pseudo-Formal proof (\Cref{def:pf-proof}) with the following properties for some constants $D,L$ independent of $n$,
\begin{enumerate}
    \item The depth of $T$ is $D < n$.
    \item The lengths of the conclusion $c$, the premises, and the proof are bounded above by $L$.
\end{enumerate}
\end{definition}

\begin{definition}[Block-wise verification algorithm]
\label{def:block-wise-verifier}
Let $P=(G,T)$ be a Pseudo-Formal proof with node set
$V=\{v_1,\ldots,v_n\}$, where each node is written as
\[
    v=(P_v,c_v,\pi_v).
\]
For a node $v\in V$, let $\Anc_T(v)$ denote its ancestors in the
scope-inheritance forest $T$, and define its dependency set by
\[
    \Dep_G(v)
    :=
    \{u\in V:(v,u)\in E_G\},
\]
namely the set of modules whose conclusions are invoked by $\pi_v$.

The \emph{module context} of $v$ is
\[
    C_v(P)
    :=
    \Bigl(
        (P_u)_{u\in \Anc_T(v)},
        (c_u)_{u\in \Dep_G(v)},
        P_v,c_v,\pi_v
    \Bigr),
\]
where all ancestor and dependency nodes are listed according to the canonical
left-to-right depth-first order of the proof forest.

A \emph{block-wise verification algorithm} $A$ is an algorithm specified
by a fixed module verifier $V_{\mathrm{mod}}$. On input $P$, it runs
\[
    V_{\mathrm{mod}}(C_v(P))
\]
for every module $v\in V$, and accepts if and only if all module checks accept:
\[
    A(P)=\accept
    \quad\Longleftrightarrow\quad
    V_{\mathrm{mod}}(C_v(P))=\accept
    \text{ for every } v\in V.
\]
We assume that $V_{\mathrm{mod}}$ halts on every module context in time linear
in the serialized length of that context.
\end{definition}

\begin{theorem}[Formal version of~\Cref{thm:context-informal}]
\label{thm:context-formal}
 Fix a block-wise
verification algorithm $A$ (\Cref{def:block-wise-verifier}).
There exists a fixed-size Transformer $T_1$ such that for any good Pseudo-Formal proof $P$ (\Cref{def:good-pf-proof}),
$A$ on $P$ can be simulated by $O(n)$ calls to $T_1$, with context length bounded by $c_A L(L + D + 1)$ for some constant $c_A$ depending on $A$.
\end{theorem}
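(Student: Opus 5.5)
The plan is to reduce the whole simulation to one call per module: show that every module context $C_v(P)$ has length $O(L(L+D+1))$, and then build a single fixed Transformer that simulates $V_{\mathrm{mod}}$ on any input of that size by emitting a chain of thought.

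\emph{Step 1: bounding the size of a module context.} I would bound $|C_v(P)|$ in terms of $L$ and $D$ alone, independent of $n$.

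\begin{itemize}
\item Since the depth of $T$ is at most $D$, we have $|\Anc_T(v)|\le D$. Each ancestor premise set $P_u$ has length at most $L$, so the ancestor part contributes at most $DL$ symbols.
\item The own-node part $(P_v,c_v,\pi_v)$ contributes at most $3L$ symbols.
\item For $\Dep_G(v)$, each invoked conclusion must be referenced somewhere inside $\pi_v$, and $|\pi_v|\le L$. So there are at most $L$ distinct dependencies, each of length at most $L$. This gives at most $L^2$ symbols.
\end{itemize}

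Adding separators, the serialized length is at most $c_0(L^2+DL+L)=c_0L(L+D+1)$ for an absolute constant $c_0$. The $L^2$ term is exactly the reason the bound has the form $L(L+D+1)$. I would state explicitly the modelling assumption that a citation in $\pi_v$ takes at least one symbol. This is the step that needs care: the definition of $G$ only says ``invokes'', so I would spell out that invocations are textual references in $\pi_v$.

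\emph{Step 2: one fixed Transformer for all calls.} By \Cref{def:block-wise-verifier}, $V_{\mathrm{mod}}$ is a fixed algorithm running in time linear in its input length. So on $C_v(P)$ it halts within $c_1\cdot c_0L(L+D+1)$ steps. I would invoke the standard result that a fixed-size decoder-only Transformer with chain of thought simulates a Turing machine step by step, emitting one or a constant number of tokens per step (e.g.\ the constructions in \citep{merrill2023expressive,yang2025pencillongthoughtsshort}). From this I obtain a single $T_1$, depending only on $V_{\mathrm{mod}}$ and hence on $A$, which on input $C_v(P)$ outputs $V_{\mathrm{mod}}(C_v(P))$. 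Its total context is the input plus the trace, which is at most $c_AL(L+D+1)$ with $c_A=O(c_0c_1)$.

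Two points make this step go through:
\begin{itemize}
\item The Transformer's size does not depend on $L$, $D$ or $n$, since it only encodes the transition table of the machine.
\item Positional encodings must handle contexts up to this length. I would assume they do, or use the constructions that are uniform in length.
\end{itemize}

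\emph{Step 3: aggregation.} The outer procedure iterates over the $n$ modules. For each $v$ it serializes $C_v(P)$ in the canonical DFS order, which is bookkeeping outside the Transformer, calls $T_1$ once, and returns the AND of the $n$ verdicts. By the definition of $A$, this equals $A(P)$. That gives exactly $n=O(n)$ calls, each within the context bound from Step 2.

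The main obstacle is Step 2: making sure a single fixed-size $T_1$ works uniformly across all context lengths, with no dependence on $n$. Step 1's bound on $|\Dep_G(v)|$ is the other subtle point, and it rests on the citation assumption made there.
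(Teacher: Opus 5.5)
Your proposal is correct and follows the paper's proof essentially step for step: you bound $|\Anc_T(v)|\le D$ by the depth of $T$ and $|\Dep_G(v)|\le L$ by the one-symbol-per-citation argument to get a module context of length $O(L(L+D+1))$, then apply Theorem~2 of \citet{merrill2023expressive} to the fixed machine implementing $V_{\mathrm{mod}}$, and take the conjunction of the $n$ verdicts. You also explicitly count the chain-of-thought trace in the context bound, using the linear-time assumption on $V_{\mathrm{mod}}$; this is slightly more careful than the paper, which only bounds the serialized input length.
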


\begin{proof}
Fix a block-wise verification algorithm $A$ with module verifier
$V_{\mathrm{mod}}$. Let $P=(G,T)$ be a good Pseudo-Formal proof with node set
$V=\{v_1,\ldots,v_n\}$, where each node is written as
\[
    v=(P_v,c_v,\pi_v).
\]
By definition of block-wise verification, $A$ verifies $P$ by running
\[
    V_{\mathrm{mod}}(C_v(P))
\]
for every module $v\in V$, and accepts if and only if all such module checks
accept. Thus it suffices to simulate each call to $V_{\mathrm{mod}}$ by a
call to a fixed-size Transformer.

We first bound the context size of a single module check. For a node $v$, its
module context is
\[
    C_v(P)
    =
    \Bigl(
        (P_u)_{u\in \Anc_T(v)},
        (c_u)_{u\in \Dep_G(v)},
        P_v,c_v,\pi_v
    \Bigr),
\]
where
\[
    \Dep_G(v)=\{u\in V:(v,u)\in E_G\}.
\]
Since $P$ is a good Pseudo-Formal proof, the depth of $T$ is at most $D$, and
therefore $|\Anc_T(v)|\le D$. Moreover, the out-degree of every node in $G$ is
at most $L$ (at least one character is needed to specify one invoked lemma in the proof), so $|\Dep_G(v)|\le L$. Finally, the local size of each module is
bounded by $L$: the conclusion, the number and length of the premises, and the
local proof are all bounded by $L$. Hence the serialized length of $C_v(P)$ is
at most $L(D+L+1)$.

Now view the fixed module verifier $V_{\mathrm{mod}}$ as a Turing machine
$M_{\mathrm{mod}}$ which, on input $C_v(P)$, halts and outputs either
$\accept$ or $\reject$. Since $V_{\mathrm{mod}}$ is fixed as part of the
algorithm $A$, the corresponding Turing machine $M_{\mathrm{mod}}$ is also
fixed, independent of $P$, $n$, $L$, and $D$.

By Theorem~2 of \citet{merrill2023expressive}, for any Turing machine
$M$ which, on inputs of length $1+N$, runs for at most $t(N)$ steps, with
$t(N)$ polynomially bounded, there exists a fixed-size decoder-only projected
pre-norm Transformer with strict causal saturated attention which, on input
$x$, simulates $M(x)$ using $t(N)$ decoding steps and then outputs $M(x)$
using $|M(x)|$ additional decoding steps. Applying this theorem to
$M_{\mathrm{mod}}$, there exists a fixed-size Transformer $T_1$ that simulates
one execution of $V_{\mathrm{mod}}$ on any module context $C_v(P)$. The size
of $T_1$ depends only on the fixed verifier $V_{\mathrm{mod}}$, and not on the
particular proof $P$.

Therefore, to simulate $A$ on $P$, we make one call to $T_1$ for each module
$v\in V$, with input $C_v(P)$. There are exactly $n$ modules, so this requires
$n=O(n)$ calls. Each call has context length bounded by the serialized length
of the corresponding module context, which is $O(L(D+L+1))$. The simulated algorithm accepts exactly
when all simulated module checks accept, which is precisely the acceptance
condition of $A$. This proves the theorem.
\end{proof}

\section{Prompts used in the Hard2Verify experiments}
\label{app:hard2verify-prompts}
\label{app:imo-prompts}

This appendix records the exact prompts used in the \emph{Hard2Verify}~\citep{pandit2025hard2verify}
experiments. Curly braces of the form \emph{\{problem\}} denote template fields
that are substituted at runtime with the corresponding problem statement,
stepwise solution, rewritten proof, error list, etc. We use \texttt{gpt-5.4-mini-2026-03-17}
(medium reasoning effort) for every stage.

\subsection{Direct step-level verifier baseline}
\label{app:hard2verify-prompt-baseline}

The direct LLM-as-judge baseline uses the same step-level evaluation prompt in~\citet{pandit2025hard2verify}. The model receives the problem and the solution
split into indexed steps, then returns a list of correctness verdict for
each steps. We include the prompts here for completeness.

\subsubsection*{System prompt}

\begin{promptbox}
You are a strict, reliable math grader.
Return the evaluation using the exact format requested by the user.
Provide brief justifications (1-2 sentences per step); do not reveal chain-of-thought or scratch work.
\end{promptbox}

\subsubsection*{User prompt}

\begin{promptbox}
The following is a math problem and a solution (split into steps, enclosed with tags and indexed from 0):

[Math Problem]

{problem}

[Solution]

{steps}


    Your task is to review and critique the solution step-by-step.
    For each step, determine if it is correct or incorrect.
    - A correct step is one where all of the content is correct, and is logically consistent with all previous steps and information given in the problem.
    - An incorrect step is one where the content is incorrect, or is not logically consistent with all previous steps and information given in the problem, or is based on an error in a previous step.

    Important: Any step that contains or is based on an error is considered incorrect. That is, if the error is carried forward from a previous step or is based on an error in the previous step, consider the step incorrect.

    Provide reasoning for your correctness determinations. Your final verdict should be a comma-separated list of yes and no's, where each yes or no corresponds to a step's correctness, with yes meaning correct and no meaning incorrect.

    Please use the following format to return your answer:
    Reasoning: <your reasoning for each step>
    Verdict: <your comma-separated list of yes and no's>

    Do not use any other formatting, including markdown, bold text, code blocks, or any other formatting. If your formatting is incorrect, your evaluation will be affected.
\end{promptbox}

\subsection{Pseudo-Formal rewriter}
\label{app:hard2verify-prompt-rewriter}

PF+BV first rewrites each candidate solution into the structured format of
\Cref{def:pf-proof}. The output is XML-tagged and parsed deterministically into a
tree of propositions and lemmas.

\begin{promptbox}
Rewrite the following theorem and proof into a structured formal proof outline.

Structure:
- Use at most 2 layers in the proof tree:
  1. Propositions
  2. Lemmas
- Do not introduce any deeper hierarchy.
- If a deeper tree seems natural, flatten it into a sequential list of lemmas inside the relevant proposition.
- The theorem proof may only cite propositions; a proposition proof may only cite its own lemmas and the statements of earlier propositions.
- No trivial decompositions: do not decompose the theorem into a single proposition that restates the theorem, nor a proposition into a single lemma that restates the proposition.
- Every rewritten assertion must be unambiguous: definitions and statements must be precise, and each variable must have a single, clear meaning within an assertion.


Numbering and order:
- Number propositions sequentially: 1, 2, 3, ...
- Number lemmas within each proposition: 1.1, 1.2, ..., 2.1, 2.2, ...
- The proof must read top-to-bottom as a forward sequence: if component j uses component i at the same level, then j > i. Reorder to avoid forward references.

Faithfulness to the original proof:
- Preserve the proof's content, notation, logical flow, ordering, and wording as much as possible.
- For low-level arguments inside leaf components, avoid changing the original wording.
- Only make the minimal edits needed to fit the structured format.
- Do not introduce alternative arguments, and do not repair, optimize, strengthen, or silently fix the proof.
- Do not add justifications absent from the original proof, omit relevant proof details, or introduce statements stronger than what the original proof establishes.

Assumptions, conditions, and definitions:
- Clearly state the assumptions, conditions, and definitions for every theorem, proposition, and lemma.
- Each component may inherit the setting of its enclosing parent, but if tracing back through multiple earlier statements would be needed, restate the relevant assumptions explicitly.
- If a component modifies its parent's setting, explicitly state the full updated assumptions and note which assumptions were added, removed, or changed relative to the parent.

Output format:
Wrap every section in XML-style delimiter tags as shown in the template below.
- Use EXACTLY the tag names shown: THEOREM_STATEMENT, PROPOSITION_STATEMENT, LEMMA_STATEMENT, LEMMA_PROOF, PROPOSITION_PROOF, THEOREM_PROOF.
- Every tag except THEOREM_STATEMENT and THEOREM_PROOF MUST have an id attribute matching the numbering scheme above.
- Do NOT nest tags inside each other. All tags are at the top level.
- Do NOT include any text outside of tags.
- Include all assumptions, conditions, and definitions INSIDE the statement tag they belong to.
- Use LaTeX notation for all mathematical notations.

Template:

<THEOREM_STATEMENT>
Assumptions / Conditions / Definitions.
- ...
Statement :
...
</THEOREM_STATEMENT>

<PROPOSITION_STATEMENT id="1">
Assumptions / Conditions / Definitions.
- ...
Statement :
...
</PROPOSITION_STATEMENT>

<LEMMA_STATEMENT id="1.1">
Assumptions / Conditions / Definitions.
- ...
Statement :
...
</LEMMA_STATEMENT>

<LEMMA_PROOF id="1.1">
[proof.]
</LEMMA_PROOF>

<PROPOSITION_PROOF id="1">
[explain how Lemmas 1.1, 1.2 imply Proposition 1.]
</PROPOSITION_PROOF>

<PROPOSITION_STATEMENT id="2">
...
</PROPOSITION_STATEMENT>

...

<THEOREM_PROOF>
[explain how Propositions 1, 2, ... imply the theorem.]
</THEOREM_PROOF>


Now rewrite the following theorem and proof in this format:

[PASTE THEOREM AND PROOF HERE]
\end{promptbox}

\subsection{Block verifier}
\label{app:hard2verify-prompt-bv}

PF+BV verifies each node of the Pseudo-Formal proof tree independently. The
verdict is parsed from the trailing JSON block.

\begin{promptbox}
You are an expert mathematical proof verifier.

Your task is to verify whether the proposed proof of a specific statement, called "Assertion", is correct.

You are given:
1. **Contexts**: A sequence of statements from which the Assertion may or may not inherit definitions, assumptions, or conditions. These are often the parent or ancestor statements of the Assertion, and can be the same as the global theorem. They are provided solely so you can understand the definitions and assumptions of the Assertion. They have NOT been verified and may be incorrect. Do not treat them as established truths, and do not verify them yourself. Also do not automatically assume that the Assertion inherits assumptions or definitions from them. The Assertion will specify which settings or assumptions it inherits from these contextual statements.
2. **Established Results**: Statements that have already been verified or can be assumed to be correct. You may assume all established results are correct and use them freely - do NOT re-verify them. The proof of the Assertion can invoke these results as long as the assumptions are properly justified and the definitions are consistent.
3. **Assertion**: The specific statement whose proof you must verify.
4. **Proposed Proof**: The proof of the Assertion to verify.

Instructions:
- Verify ONLY the proposed proof of the Assertion.
- Read the Assertion carefully and analyze the proof step by step.
- Identify any incorrect, unjustified, or logically invalid reasoning.
- Pay close attention to potentially confusing or ambiguous interpretations of concepts.
- When the proof references an established result, you may trust its conclusion, but you must verify that it is correctly applied:
    - Check that the result is used within its valid scope.
    - Explicitly identify the assumptions of the referenced result and confirm that each one is satisfied in the current context.
    - Verify that the definitions used in the invoked established results are the same as in the Assertion.
    - Detail which assumptions hold and why.
- For every term used in the proof, verify that its interpretation is unambiguous and consistent throughout. If a term is used with different meanings in different places, the proof is incorrect - do not guess or resolve the ambiguity yourself.

At the very end of your response, you MUST output your final verdict as a JSON block. Do NOT write anything after the JSON block.

If CORRECT:
```json
{"verdict": "CORRECT", "error_description": null}
```

If INCORRECT:
```json
{"verdict": "INCORRECT", "error_description": "Identify the specific step that fails, state what it claims, and explain why it is wrong or unjustified."}
```

**CONTEXTS**

{contexts}

**ESTABLISHED RESULTS**

{established_results}

**ASSERTION**

{assertion}

**PROPOSED PROOF**

{proof}
\end{promptbox}

\subsection{Calibration}
\label{app:hard2verify-prompt-calibrator}

When the block verifier flags at least one potential error, PF+BV uses the calibrator
below to map block-level reports back to the benchmark's step-level output
format. 

\begin{promptbox}
You are a strict, reliable mathematical proof grader.
Your task is to evaluate the ORIGINAL solution to a math problem, step by step.

The structured rewrite and the list of potential errors are diagnostic aids only. They are not what you are grading. The rewrite may introduce omissions, distortions, false claims, or artificial dependencies that were not present in the original solution.

You must decide which original solution steps are mathematically correct or incorrect, and identify the first incorrect original step.

Input data:

1. Math Problem:
   The original problem statement.

2. Original Solution Steps:
   The solver's original solution split into indexed steps, written as:
   <step>[0] ...</step>
   <step>[1] ...</step>
   and so on.

3. Rewritten Proof:
   A structured rewrite of the original solution, decomposed into theorem / proposition / lemma / claim / fact blocks. This rewrite was produced automatically and may contain artifacts that are not present in the original solution.

4. Potential Errors:
   A list of potential errors flagged by automated verification of the rewritten proof. These may or may not be genuine errors in the original solution.

Evaluation process:

1. Validate each potential error.
   For each flagged issue, decide whether it is a genuine mathematical error in the ORIGINAL solution or a false alarm caused by the rewrite. Cross-check against the original solution steps. If the original solution contains the missing reasoning, a stronger equivalent argument, or a correct alternative route, do not count the flag as an original-solution error.

2. Inspect the original steps as needed.
   The potential errors are useful leads, but your final output must be based on the original solution steps. Check any original steps needed to identify the first genuine incorrect step. If you find that the first incorrect step was not covered by any potential error, report it in <additional_errors>.

3. Grade the original steps, not the rewrite.
   For each original step, determine whether that step is correct or incorrect.
   - A correct step is one where all mathematical content is correct and logically consistent with the problem and all previous correct steps.
   - An incorrect step is one that contains a mathematical error, an unjustified load-bearing claim, a logical inconsistency, a misapplied theorem, a false computation, or reasoning that depends on an earlier incorrect step.
   - If a step merely repeats or depends essentially on an earlier incorrect step, mark it incorrect as well.
   - Do not mark a step incorrect for harmless terseness, routine omitted algebra, stylistic issues, or rewrite artifacts.

4. Identify the first incorrect step.
   The first incorrect step is the smallest original step index whose content is mathematically incorrect or whose reasoning first depends on an error.
   If every original step is correct, output -1.

Important calibration rules:

- The original solution steps are authoritative for the final verdict.
- The rewritten proof and potential errors are evidence, not ground truth.
- Do not penalize the original solution for mistakes introduced only by the rewriter.
- Do not excuse a genuine gap in the original solution merely because the rewrite attempted to repair it.
- Focus on mathematical substance: incorrect proof steps, false claims, unjustified key lemmas, invalid dependencies, incorrect computations, and misapplied results.
- Ignore typos, formatting, wording preferences, and non-load-bearing presentation issues.
- Be strict but not pedantic: standard facts, routine algebra, and obvious intermediate manipulations need not be fully spelled out when they are mathematically valid and unambiguous.

Output requirements:

Return your final answer using exactly the XML-style structure below.
Do not use Markdown, bullet lists, code fences, or any text after the closing </calibration> tag.
Inside descriptions you may write LaTeX freely.

<calibration>
  <flag_audit>
    <flag>
      <source>Brief identifier of the potential error being audited, such as the rewritten block name or a short quote.</source>
      <status>genuine</status>
      <original_step>the original step index most directly affected, or -1 if no single step applies</original_step>
      <explanation>Brief explanation of why this is a genuine error in the original solution.</explanation>
    </flag>
    <flag>
      <source>...</source>
      <status>false_alarm</status>
      <original_step>-1</original_step>
      <explanation>Brief explanation of why this is only a rewrite artifact or otherwise not an error in the original solution.</explanation>
    </flag>
  </flag_audit>
  <additional_errors>
    <error>
      <original_step>step index</original_step>
      <description>Brief description of any genuine original-solution error not already covered by a potential flag.</description>
    </error>
  </additional_errors>
  <step_verdicts>yes,no,yes,...</step_verdicts>
  <first_incorrect_step>N</first_incorrect_step>
</calibration>

Rules for the parsed fields:

- <step_verdicts> must contain exactly {num_steps} comma-separated entries.
- Each entry must be exactly yes or no.
- yes means the corresponding original step is correct.
- no means the corresponding original step is incorrect.
- <first_incorrect_step> must be the first index whose step verdict is no.
- If every step verdict is yes, <first_incorrect_step> must be -1.
- If there are no potential flags to audit, output an empty <flag_audit> block.
- If there are no additional errors, output an empty <additional_errors> block.

MATH PROBLEM:
{problem}

ORIGINAL SOLUTION STEPS:
{steps}

REWRITTEN PROOF:
{rewritten_proof}

POTENTIAL ERRORS FROM REWRITTEN-PROOF VERIFICATION:
{errors}
\end{promptbox}

\section{Arxiv paper filtering}
\label{app:arxiv-paper-filtering}

  We describe the pipeline used to construct 
  \texttt{ArxivMathGradingBench}. 
  The pipeline has three stages: (i)
   retrieval and regex filtering, (ii) LLM-based classification 
   of revision comments, and (iii) source download.

  \textbf{Stage 1: Metadata retrieval and regex filtering.}
  We query the public arXiv API for all papers in the
  \texttt{math.*} categories submitted between January 1,
  2025 and July 30, 2025. For every entry returned we 
  extract the arXiv ID, version number, primary category,
  publication date, abstract, and revision comment.
  If revision comment is empty (paper is v1 or no comment)
  we skip, else we apply a regex filter to the 
  revision comment field. We retain only papers whose comment matches the case-insensitive regex
      \begin{center}
      \texttt{correct|errat|error|fix|mistake|bug|flaw|wrong|revised|amendment}.
      \end{center}
       We additionally require the comment to mention a formal mathematical result, via the case-insensitive regex
      \begin{center}
      \texttt{lemma|theorem|proposition}.
      \end{center}

  \textbf{Stage 2: LLM classification.}
  The regex filters are recall-oriented and admit many false positives (e.g., authors writing ``corrected typos in the proof of Theorem~3''). To remove these, we
  pass each comment to GPT 5.4 nano and ask it to classify the correction as one of three categories using the following prompt:

\begin{promptbox}
You are classifying arXiv paper revision comments. Based on the comment below, classify the correction as one of exactly three categories:

  - "major": A significant mathematical error was fixed — e.g. a flawed proof, wrong theorem statement, claim weakened/downgraded, incorrect bound, flawed logic or
   decomposition.
  - "minor": A small but genuine mathematical correction — e.g. a wrong constant, missing factor, fixed proof step.
  - "none": No real mathematical mistake was fixed — e.g. typos, formatting, added references, added content, improved exposition, LaTeX fixes.

Respond with ONLY one word: major, minor, or none.

Comment: {comments}

\end{promptbox}

  We retain papers classified as either \emph{major} or \emph{minor} and discard the rest. 

  At this stage we have 100 papers. We 
  then remove any papers with a non permissive license,
  or whose comment we determine to be a false positive.
  This leaves us with the 35 papers we use for the 
  final dataset.

  \paragraph{Stage 3: Source download.}
  For each retained paper we download the PDF  
  for the version before the correction.  

\section{Prompts used in the arXiv experiments}
\label{app:arxiv-prompts}

This appendix records the exact prompts used in the \texttt{ArxivMathGradingBench}
experiments. Curly braces of the form \{problem\} denote template fields that are
substituted at runtime with the original LaTeX, the rewritten paper, the
per-component error list, etc. We use \texttt{gpt-5.4-mini} (medium reasoning effort)
for every stage. The arxiv pipeline mirrors the \texttt{Hard2Verify} pipeline of
Appendix~\ref{app:imo-prompts}, with three differences: the rewriter and calibrator
take the rendered PDF as input, the 
rewriter supports multiple
top-level theorems, and the calibrator outputs an XML list of error locations
keyed to PDF labels rather than a 0/1/6/7 score.

\subsection{LLM-as-judge baseline}

Used as the direct \emph{baseline} verifier on \texttt{ArxivMathGradingBench}: a single
call that receives the rendered PDF and raw LaTeX of the paper and returns an XML
list of error locations.

\begin{promptbox}
You are a mathematical referee reviewing a paper submitted to a peer-reviewed mathematics journal. You are given the paper as a rendered PDF (attached as a file). Read the paper as a human reader would, with all theorem/lemma/proposition numbers rendered, and inspect the notation and equations directly in the PDF.

Your task is to identify any mathematical errors in the paper that would require revision before publication. Focus on errors of mathematical substance: incorrect proofs, unjustified steps, false claims, gaps in reasoning, miscomputed quantities, misapplied theorems, and similar issues. Do not report typos, stylistic concerns, formatting issues, or notational preferences.

CRITICAL — how to identify locations:

For each error you find, identify the location using the rendered label EXACTLY as it appears in the PDF. For example: "Theorem 19", "Lemma 2.3", "Proposition 5.1", "Corollary 3.5", "Theorem B".

If the error is in an unlabelled passage (no Theorem/Lemma/Proposition number applies), name the surrounding section heading or use a short descriptive locator that a reader could find in the PDF (for example "Section 3.2, paragraph after Definition 2.1").

OUTPUT FORMAT:

After your analysis, output your final answer using the following XML-style format. Use one `<error>` block per error. Use the field tags exactly as shown. You may write LaTeX math (with raw backslashes) freely inside the `<description>` field; do not escape anything. Do not output anything after the closing `</errors>` tag.

<errors>
  <error>
    <location>Theorem 4.2</location>
    <description>Brief description of the mathematical error and why it is wrong.</description>
  </error>
  <error>
    <location>Lemma 5.1</location>
    <description>...</description>
  </error>
</errors>

If you find no mathematical errors, output an empty errors block:

<errors>
</errors>
\end{promptbox}

\subsection{Pseudo-Formal rewriter}

Used in the first stage of the PF pipeline on whole papers: takes the rendered PDF
 and rewrites the paper into a structured tree with multiple
top-level theorems, a globally-numbered shared pool of propositions, and lemmas
nested under each proposition.

\begin{promptbox}
"Rewrite the mathematical paper below into a structured formal proof outline.

You are given the rendered PDF of the paper (attached as a file). Read the paper as a human would, inspect the precise notation, equations, and exact wording of each statement and proof directly in the PDF, and identify the rendered numerical or letter labels of every theorem/lemma/proposition/corollary/claim.

Structure:
- The top level of the rewrite is a list of **theorems**. Use one `<THEOREM_STATEMENT id="N">` block per top-level result the paper proves. Top-level results are the paper's flagship theorem(s) plus any independent secondary theorems / propositions / corollaries that the paper states as headline results (i.e. results that are not themselves used merely as intermediate steps for another result).
- Below the theorems, use at most 2 further layers in the proof tree:
  1. Propositions (shared pool, numbered globally across all theorems)
  2. Lemmas (numbered within each proposition)
- Do not introduce any deeper hierarchy. If a deeper tree seems natural, flatten it into a sequential list of lemmas inside the relevant proposition.
- Each `<THEOREM_PROOF id="N">` may cite any of the propositions; a `<PROPOSITION_PROOF id="K">` may cite its own lemmas and the statements of earlier propositions. A theorem proof must NOT cite another theorem unless that theorem appears earlier in the rewrite (i.e. has a smaller id) and the citation reflects an actual dependency in the paper.
- No trivial decompositions: do not decompose a theorem into a single proposition that restates the theorem, nor a proposition into a single lemma that restates the proposition.
- Every rewritten assertion must be unambiguous: definitions and statements must be precise, and each variable must have a single, clear meaning within an assertion.

Numbering and order:
- Number theorems sequentially: `<THEOREM_STATEMENT id="1">`, `<THEOREM_STATEMENT id="2">`, ...
  - Order theorems in the order they appear in the paper.
- Number propositions sequentially across the whole paper: 1, 2, 3, ... (a single shared pool, NOT per-theorem).
- Number lemmas within each proposition: 1.1, 1.2, ..., 2.1, 2.2, ...
- The proof must read top-to-bottom as a forward sequence: if component j uses component i at the same level, then j > i. Reorder to avoid forward references.

Faithfulness to the original paper:
- Preserve the paper's mathematical content, notation, logical flow, ordering, and wording as much as possible.
- For low-level arguments inside leaf components, avoid changing the original wording.
- Only make the minimal edits needed to fit the structured format.
- Do not introduce alternative arguments, and do not repair, optimize, strengthen, or silently fix the paper.
- Do not add justifications absent from the original paper, omit relevant proof details, or introduce statements stronger than what the original paper establishes.

Assumptions, conditions, and definitions:
- Clearly state the assumptions, conditions, and definitions for every theorem, proposition, and lemma.
- Each component may inherit the setting of its enclosing parent, but if tracing back through multiple earlier statements would be needed, restate the relevant assumptions explicitly.
- If a component modifies its parent's setting, explicitly state the full updated assumptions and note which assumptions were added, removed, or changed relative to the parent.

Output format:
Wrap every section in XML-style delimiter tags as shown in the template below.
- Use EXACTLY the tag names shown: THEOREM_STATEMENT, PROPOSITION_STATEMENT, LEMMA_STATEMENT, LEMMA_PROOF, PROPOSITION_PROOF, THEOREM_PROOF.
- Every tag MUST have an id attribute matching the numbering scheme above. (Including THEOREM_STATEMENT and THEOREM_PROOF — unlike other rewrite formats you may have seen.)
- Do NOT nest tags inside each other. All tags are at the top level.
- Do NOT include any text outside of tags.
- Include all assumptions, conditions, and definitions INSIDE the statement tag they belong to.
- Use LaTeX notation for all mathematical notations.

Template:

<THEOREM_STATEMENT id="1">
Assumptions / Conditions / Definitions.
- ...
Statement :
...
</THEOREM_STATEMENT>

<THEOREM_STATEMENT id="2">
Assumptions / Conditions / Definitions.
- ...
Statement :
...
</THEOREM_STATEMENT>

<PROPOSITION_STATEMENT id="1">
Assumptions / Conditions / Definitions.
- ...
Statement :
...
</PROPOSITION_STATEMENT>

<LEMMA_STATEMENT id="1.1">
Assumptions / Conditions / Definitions.
- ...
Statement :
...
</LEMMA_STATEMENT>

<LEMMA_PROOF id="1.1">
[proof.]
</LEMMA_PROOF>

<PROPOSITION_PROOF id="1">
[explain how Lemmas 1.1, 1.2 imply Proposition 1.]
</PROPOSITION_PROOF>

<PROPOSITION_STATEMENT id="2">
...
</PROPOSITION_STATEMENT>

...

<THEOREM_PROOF id="1">
[explain how Propositions 1, 3, ... imply Theorem 1.]
</THEOREM_PROOF>

<THEOREM_PROOF id="2">
[explain how Propositions 2, 4, ... imply Theorem 2.]
</THEOREM_PROOF>


Now rewrite the following paper in this format. The paper is provided as the attached PDF.
\end{promptbox}

\subsection{Rewriter regeneration}

Re-invoked when the faithfulness checker flags a component as UNFAITHFUL: takes the
previous rewrite and the list of discrepancies, and produces a corrected rewrite in
the same format (up to 3 attempts).

\begin{promptbox}
You previously produced a structured rewrite of a mathematical paper, but a faithfulness checker identified discrepancies between your rewrite and the original paper.

Your task is to produce a NEW rewritten paper that fixes the identified issues while still following the structured format.

You will receive:
1. The rendered PDF of the original paper (attached as a file — ground truth — your rewrite must faithfully represent this).
2. Your previous rewritten paper, which contained faithfulness errors.
3. The specific discrepancies flagged by the checker.

Requirements for the new rewrite:
- Fix every issue listed in the Identified Errors. For each issue, make sure the new rewrite no longer deviates from the original paper in that way.
- Do NOT introduce new discrepancies: do not strengthen/weaken claims, omit steps, add arguments, drift in notation, or misrepresent the logical structure relative to the original paper.
- Preserve the original paper's content, notation, logical flow, ordering, and wording as much as possible.
- Follow the SAME structured output format (XML-style tags, numbering, multiple `<THEOREM_STATEMENT id="N">` blocks at the top, etc.) as the rewriting instructions below.

--- Original rewriting instructions (follow these for the output format) ---
{rewrite_instructions}
--- End of rewriting instructions ---

PREVIOUS REWRITTEN PAPER (contains errors):
{previous_rewrite}

IDENTIFIED ERRORS:
{errors}

Now produce the corrected rewritten paper. Output ONLY the rewritten paper using the XML-tag format — no commentary before or after.
\end{promptbox}

\subsection{Faithfulness check (per component)}

After parsing the rewriter output, every node of the rewritten paper is checked
against the original PDF for faithfulness; any component flagged UNFAITHFUL is
sent back to the rewriter regeneration step above.

\begin{promptbox}
You are an expert mathematician reviewing whether a component of a rewritten mathematical paper faithfully represents the corresponding part of the original paper.

You are given:
1. **Original Paper (PDF)**: The rendered PDF of the original paper, attached as a file. This is the ground truth.
2. **Contexts**: Statements from the rewritten paper that the Assertion inherits definitions or assumptions from (e.g., the enclosing theorem statement, the enclosing proposition statement). These are provided so you can understand the scope of the Assertion.
3. **Established Results**: Statements from the rewritten paper that have already been checked and can be assumed to be faithfully rewritten. You may use them as reference points.
4. **Assertion**: The specific rewritten statement whose faithfulness you must verify.
5. **Proposed Proof**: The rewritten proof of the Assertion (may be empty for a top-level theorem statement).

Your task is to determine whether the Assertion and its Proposed Proof **faithfully represent** the corresponding part of the Original Paper.

Check for:
1. **Strengthened or weakened claims**: Does the Assertion claim more or less than the original paper establishes at the corresponding point?
2. **Omitted content**: Does the Proposed Proof drop a non-trivial argument that appears in the original paper?
3. **Added content**: Does the Proposed Proof introduce new arguments, repairs, or proof ideas not present in the original paper?
4. **Notation drift**: Are variables, functions, or definitions used differently than in the original paper?
5. **Misinterpretation**: Does the Assertion or Proposed Proof misunderstand the original's reasoning or logical structure?
6. **Scope errors**: Are assumptions incorrectly inherited, dropped, or added compared to the original?

Instructions:
- Do NOT judge whether the original paper is mathematically correct. Your sole task is faithfulness.
- Only flag changes that alter mathematical meaning. Cosmetic rephrasing is fine.
- Use the Contexts to understand what the Assertion is allowed to assume.
- Use the Established Results as anchors: if a prior component was faithful, you can compare the current component's references to it.

At the very end of your response, output your verdict as a JSON block. Do NOT write anything after the JSON block.

If FAITHFUL:
```json
{{"verdict": "FAITHFUL", "error_description": null}}
```

If UNFAITHFUL:
```json
{{"verdict": "UNFAITHFUL", "error_description": "Identify the specific discrepancy: what the rewrite says vs. what the original paper says."}}
```

**ORIGINAL PAPER**
The original paper is the PDF attached to this message.

**CONTEXTS**
{contexts}

**ESTABLISHED RESULTS**
{established_results}

**ASSERTION**
{assertion}

**PROPOSED PROOF**
{proof}
\end{promptbox}

\subsection{Block verifier}

Reused unchanged from the Hard2Verify pipeline; see Appendix~\ref{app:imo-prompts},
block-verifier section.

\subsection{Calibrator}

A second-pass referee called after the block verifier has produced per-component
error reports. The calibrator decides which flagged errors are genuine errors in
the underlying paper (vs.\ rewriting artifacts), may add errors the component
verifier missed, and emits the final XML error list with locations keyed to the
labels rendered in the PDF (e.g.\ Theorem~19'', Lemma~4.3''), not the
rewriter's internal numbering.

\begin{promptbox}
You are an expert mathematical referee. Your task is to produce the FINAL list of mathematical errors in a peer-reviewed mathematics paper.

You are given:
1. The rendered PDF of the original paper (attached as a file). This is your authoritative source for the paper's content — precise notation, equations, exact wording — and for the labels you must use in the final answer (e.g. "Theorem 19", "Lemma 4.3", "Proposition 5.1", "Theorem B").
2. A structured rewrite of the paper, decomposed into theorems / propositions / lemmas. The rewrite was produced by an automated rewriter and may itself introduce mistakes, omissions, or distortions that were **not** present in the original paper.
3. A list of potential errors that an automated component-verifier flagged in specific components (lemmas, propositions, or theorems) of the rewritten paper. These potential errors may or may not be genuine errors in the underlying paper — in particular, an "error" may be an artifact of the rewriting process (e.g., the rewriter dropped a key step, misstated a claim, or restructured the argument in a way that obscures correct reasoning that **is** present in the original paper).

Evaluation process:

1. **Error validation.** For each potential error, carefully determine whether it is a genuine error in the underlying paper or a false alarm. Examine the error in the context of the full rewritten paper, AND cross-check against the original paper (the attached PDF) to see whether the reasoning the rewritten paper is missing or misstating actually appears (correctly) in the original. If so, treat the error as a rewriting artifact rather than a genuine error in the paper.

2. **Search for additional errors.** You may also report errors in the original paper that the component verifier did NOT flag — including errors in unlabelled prose passages, between numbered results, or that span multiple components. Do not feel constrained to only the listed potential errors.

3. **Report only mathematical errors.** Focus on errors of mathematical substance (incorrect proofs, unjustified steps, false claims, gaps in reasoning, miscomputed quantities, misapplied theorems). Do not report typos, stylistic concerns, formatting issues, or notational preferences.

CRITICAL — labelling of locations in the final answer:

For every error you report, the `<location>` field MUST use the rendered label exactly as it appears IN THE PDF. Examples: "Theorem 19", "Lemma 4.3", "Proposition 5.1", "Corollary 3.5", "Theorem B".

Do NOT use:
- Internal rewrite-tree labels like "Proposition 7" or "Lemma 1.2" (those are the rewriter's invented numbering and do not match the paper's numbering).
- Raw LaTeX `\\ref{{...}}` or `\\label{{...}}` references.

If the error is in an unlabelled passage, use a short descriptive locator a reader can find in the PDF (for example "Section 3.2, paragraph after Definition 2.1").

OUTPUT FORMAT:

Output your final answer using the following XML-style format. Use one `<error>` block per error. Use the field tags exactly as shown. You may write LaTeX math (with raw backslashes) freely inside the `<description>` field; do not escape anything. Do not output anything after the closing `</errors>` tag.

<errors>
  <error>
    <location>Theorem 4.2</location>
    <description>Brief description of the mathematical error and why it is wrong.</description>
  </error>
  <error>
    <location>Lemma 5.1</location>
    <description>...</description>
  </error>
</errors>

If you find no genuine mathematical errors in the paper, output an empty errors block:

<errors>
</errors>

REWRITTEN PAPER (PROPOSED STRUCTURED FORM):
{rewritten_paper}

POTENTIAL ERRORS (from automated verification of the rewritten paper):
{errors}
\end{promptbox}

\subsection{\texttt{ArxivMathGradingBench} Judge}
\label{sec:app_arxiv_judge}

We use an LLM judge (GPT 5.4 mini
in the case of our experiments) 
to decide 
if the errors outputted by 
the verifier match those of 
the ground truth labels.
This judge receives 
the paper PDF, the ground truth 
error location(s) and the verifier 
predicted error locations(s). 
For each predicted error, it outputs
1 if the predicted location 
is the same as \emph{or contained 
within} the ground truth 
error location. For example, 
if the ground truth is ``Theorem 1'' 
and the prediction is ``theorem 1'', 
this is labeled as correct 
(ignoring the case differences). 
If the prediction is ``Proposition 1.1'',
which is \emph{within Theorem 1} in 
the paper PDF, then this is also 
labeled as correct.

We note there is a possibility of 
labeling error here. Perhaps 
the error the author noted in
Theorem 1 was actually Proposition 1.2,
not 1.1. The other labeling option
is to only match if the predicted error 
and ground truth are at the same 
level of the proof hierarchy. That is 
in the prior case, only predicting 
variants of ``Theorem 1'' would be 
judged as correct. We call 
this the \emph{Strict} grading, 
and the prior method \emph{Inclusive} 
grading. The results reported in 
the main paper use Inclusive grading.

\begin{figure}[h]
    \centering
\includegraphics[width=1.0\textwidth]{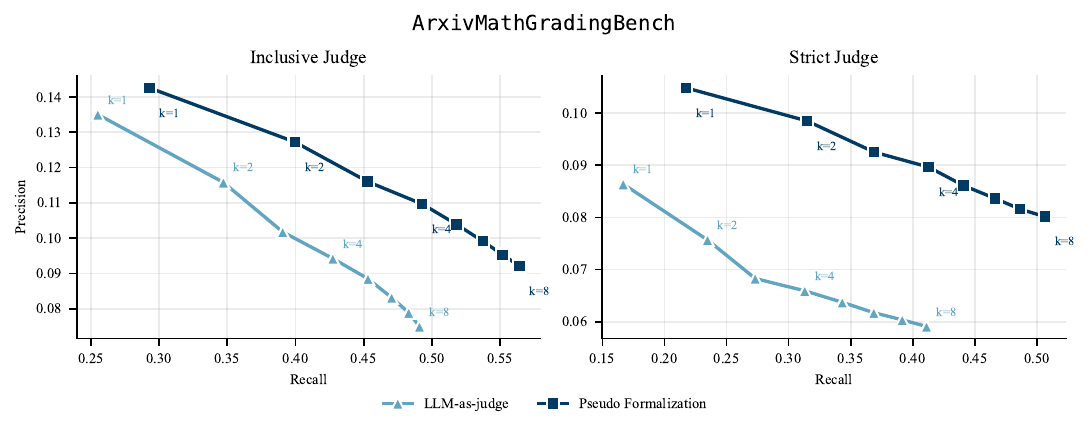}
    \caption{\texttt{ArxivMathGradingBench} performance using two different 
    LLM judges for grading. We report 
    Inclusive in the main paper. For Strict, the gain of PF over baseline is even larger.} 
    
    \label{fig:figure_strict}
\end{figure}

We test using Strict grading,
and actually find Pseudo-Formalization
outperforms the baseline by even 
larger margins than when using 
Inclusive grading (see \Cref{fig:figure_strict}). We 
believe Inclusive is the better 
grading strategy however, and 
so report performance using the 
inclusive metric in the main body.

The prompt for the Inclusive 
grader is below:

\begin{promptbox}
You must decide whether a verifier's predicted error refers to a known ground-truth (GT) error location in a mathematics paper. The full paper is attached as a PDF — use it to resolve where the predicted location actually sits in the paper's structure.

Paper title: {title}

Ground-truth error locations (the author corrected errors at these locations):
{gt_list}

Predicted error:
- Location: {pred_loc}
- Description: {pred_desc}

A predicted error matches a GT location ONLY if one of these holds:
(a) its location names the same result — same number/letter label (e.g. "proof of Theorem 19" or "Theorem 19, equation (4.2)" match GT "Theorem 19"); or
(b) in the attached paper, its location is a specific step, claim, case, equation, or passage that lies INSIDE the GT result's statement or proof (e.g. "Claim 2.2 in the proof of Theorem 2.1" matches GT "Theorem 2.1").

It is NOT a match if the predicted location names a DIFFERENT numbered result or a different part of the paper that merely uses, cites, depends on, or inherits a flaw from the GT result — even if the description discusses the GT result extensively. Different numbers never match ("Theorem 18" does not match GT "Theorem 19").

Answer with ONLY a JSON object, no other text:
{{"matched_gt_indices": [<0-based indices into the GT list, empty if none>]}} 
\end{promptbox}

The prompt for the Strict 
grader is below:

\begin{promptbox}
    You must decide whether a verifier's predicted error refers to a known ground-truth (GT) error location in a mathematics
     paper. You are only given the location strings, which is sufficient: this is a strict label comparison.

     Paper title: {title}

     Ground-truth error locations (the author corrected errors at these locations):
     {gt_list}

     Predicted error:
     - Location: {pred_loc}
     - Description: {pred_desc}

     A predicted error matches a GT location ONLY if its location names the SAME result — the same number/letter label, or an explicitly stated alias (e.g.
     "Theorem 4.24 (Theorem B)" matches GT "Theorem B"; "proof of Theorem 19" or "Theorem 19, equation (4.2)" match GT "Theorem 19").

     It is NOT a match if the predicted location names anything else: a different numbered result, an unnumbered claim/step/case inside the GT result's proof,
     a section or passage, or a result that merely uses, cites, depends on, or inherits a flaw from the GT result — even if the description discusses the GT
     result extensively. Different numbers never match ("Theorem 18" does not match GT "Theorem 19").

     Answer with ONLY a JSON object, no other text:
     {{"matched_gt_indices": [<0-based indices into the GT list, empty if none>]}}
\end{promptbox}

\section{Compute cost}
\label{app:compute-cost}

This appendix records the token and dollar cost accounting for our
verification runs.  Costs use standard GPT-5.4 mini pricing:
\$0.75/M input tokens, \$0.075/M cached input tokens, and
\$4.50/M output tokens.  For both benchmarks, the direct baseline
makes one grading call per rollout, while PF runs the full
pseudo-formalization, faithfulness, block-verification, and final
calibration pipeline once per rollout.  \Cref{tab:cost-accounting}
reports aggregate token usage and API cost for the main runs. We note
that while our algorithm has higher cost per round, the Pareto
improvement when we sweep $k$ in~\Cref{fig:figure_1}
cannot be attained by simply scaling inference-time compute.

\begin{table}[h]
    \centering
    \footnotesize
    \setlength{\tabcolsep}{3.5pt}
    \caption{Token usage and API cost for the benchmark runs.}
    \label{tab:cost-accounting}
    \begin{tabular}{@{}lrrrrrr@{}}
        \toprule
        Setup & Papers $\times$ runs & Per-rollout in & Per-rollout out
        & Total in & Total out & Cost \\
        \midrule
        Hard2Verify baseline & $200 \times 8$ & 2.6K & 3.9K & 4.18M & 6.19M & \$30.07 \\
        Hard2Verify PF       & $200 \times 8$ & 64K  & 26K  & 102.41M & 42.03M & \$264.24 \\
        Arxiv baseline       & $35 \times 8$  & 60K  & 16K  & 16.79M & 4.56M & \$33.12 \\
        Arxiv PF             & $35 \times 8$  & 213K & 96K  & 59.60M & 26.89M & \$165.68 \\
        \bottomrule
    \end{tabular}
\end{table}

The higher cost of PF is distributed across the phases of the algorithm
rather than concentrated in a single call.  To make this visible,
\Cref{tab:hard2verify-pf-stage-cost} gives phase-level accounting for
the PF run on \texttt{Hard2Verify}.

\begin{table}[h]
    \centering
    \small
    \caption{
        Stage-level token usage and API cost for \texttt{Hard2Verify}
        PF/BV plus v1 calibration at $k=8$ over 200 proofs.
    }
    \label{tab:hard2verify-pf-stage-cost}
    \begin{tabular}{lrrrrr}
        \toprule
        Stage & Calls & Input & Cached in & Output & Cost \\
        \midrule
        Rewrite & 1,600 & 6.12M & 0.00M & 13.68M & \$66.14 \\
        Faithfulness / parse checks & 15,826 & 58.05M & 0.00M & 5.88M & \$70.01 \\
        Block verification & 15,857 & 27.12M & 0.00M & 15.86M & \$91.70 \\
        Step-level calibration (v1) & 1,600 & 11.12M & 2.55M & 6.62M & \$36.40 \\
        \midrule
        Total & 34,883 & 102.41M & 2.55M & 42.03M & \$264.24 \\
        \bottomrule
    \end{tabular}
\end{table}

\end{document}